\newtheorem{de}{Definition}[section]
\newtheorem{theo}{Theorem} [section] 
\newtheorem{prop}{Proposition}[section]
\newtheorem{lem}{Lemma}[section]
\newtheorem{rem}{Remarks}[section] 
\newtheorem{coro}{Corollary}[section]
\newcommand{\Hcal}{\mathcal{H}}
\newcommand{\Kcal}{\mathcal{K}}
\newcommand{\Lcal}{\mathcal{L}}
\newcommand{\Pcal}{\mathcal{P}}
\newcommand{\Acal}{\mathcal{A}}
\newcommand{\Ecal}{\mathcal{E}}
\newcommand{\Bcal}{\mathcal{B}}
\newcommand{\Mcal}{\mathcal{M}}
\newcommand{\Gcal}{\mathcal{G}}
\newcommand{\Prob}{\mathbb{P}}
\newcommand{\Fcal}{\mathcal{F}}
\newcommand{\R}{\mathbb{R}}
\newcommand{\Ebb}{\mathbb{E}}
\newcommand{\Nbb}{\mathbb{N}}
\newcommand{\C}{\mathbb{C}}
\newcommand{\Sp}{\text{Sp}}
\newcommand{\Ind}{\mathds{1}}
\newcommand{\sca}[2]{\langle #1 ,#2\rangle}
\newcommand{\norm}[1]{\left\| #1 \right\|}
\newcommand{\proj}[2]{|#1\rangle\langle #2|}
\newcommand{\eps}{\varepsilon}
\newcommand{\Tr}{\text{Tr}}
\newcommand{\off}{\text{off}}
\numberwithin{equation}{section}
\author{\vspace{0.5cm}
				Ivan Bardet\footnote{Institut Camille Jordan, Universit\'e Claude Bernard Lyon 1, 43 boulevard du 11 novembre 1918, 69622 Villeurbanne, cedex France
}\\
}
\date{}
\title{Quantum extensions of dynamical systems and of Markov semigroups\thanks{Work supported by ANR-14-CE25-0003 "StoQ" }}
\begin{document}

\maketitle

\begin{abstract}
We investigate some particular completely positive maps which admit a stable commutative Von Neumann subalgebra. The restriction of such maps to the stable algebra is then a Markov operator. In the first part of this article, we propose a recipe in order to find a quantum extension of a given Markov operator in the above sense. We show that the existence of such an extension is linked with the existence of a special form of dilation for the Markov operator studied by Attal in \cite{Att1}, reducing the problem to the extension of dynamical system. We then apply our method to the same problem in continuous time, proving the existence of a quantum extension for L\'evy processes. In the second part of this article, we focus on the case where the commutative algebra is isomorphic to $\Acal=l^\infty(1,...,N)$ with $N$ either finite or infinite. We propose a classification of the CP maps leaving $\Acal$ stable, producing physical examples of each classes.
\end{abstract}

\section{Introduction}

We are interested in those quantum dynamics, on the algebra of bounded operators on some separable Hilbert space, that admit a stable commutative subalgebra. The interest of such a property holds in the fact that if this algebra is maximal then it can be looked at as the algebra of bounded functions on some measurable space. The restriction to this subalgebra is a semigroup of positive and identity preserving operators, that is, a classical Markov semigroup. Consequently, the evolution of the observables of this algebra follows a classical evolution. For this reason, we call such dynamics \emph{subclassical}.
\\

The question of the existence of such a stable algebra was first arised and motivated by Rebolledo in \cite{Reb2} and then studied by several authors (\cite{F-S}, \cite{B-F-S}). We shall be concerned with the inverse problem: given a classical Markov semigroup, is it the restriction of a quantum Markov semigroup to a maximal stable commutative algebra? Answering this question would help understanding which classical processes can appear in quantum dynamics.

There has already been a lot of works on the existence of such quantum extensions. The first motivation was to show the possibility to embed the classical theory of stochastic calculus in the quantum one, created by Hudson and Parthasarathy in their pioneering article \cite{H-P}. Such an embedding shows that the latter is a true generalization of the former. This fact was already remarked by Meyer in \cite{Mey3}, in the case of finite Markov chains in discrete time. Meyer's construction was extended to the case of jump processes by Parthasarathy and Sinha in \cite{P-S} who constructed the structure maps of the flow through certain group actions. This was followed by the proof of the existence of quantum extensions for Az\'ema martingales \cite{C-F} by Chebotarev and Fagnola, and Bessel processes \cite{F-M} by Fagnola and Monte. The main idea behind the two last results was highlighted by Fagnola in \cite{Fagn}, where he gives sufficient conditions on the generator of a Quantum Markov Semigroup (QMS) in order to admit a classical restriction. As an application, he proved the existence of a quantum extension for diffusive Markov semigroups. Basically, it amounts to prove that the generator is itself a quantum extension of the generator of a classical Markov semigroup. The subsequent recipe has the great advantage to avoid technical problems such as the continuity of the QMS. The main difficulty consists in finding the adequate quantum generator.
\\

In this article, we propose a different recipe to find quantum extension of a Markovian dynamics. This recipe can be view as a generalization of one used by Gregoratti in \cite{Gr2}. It is based on the existence of a particular form of classical dilation of a Markov operator. In fact, it was shown in \cite{Att1} by Attal that Markov operators on Lusin space all admit dilations in the form of a dynamical system on a product space. This particular case of classical dilations was further studied by Deschamps \cite{Des1}, who showed how their limits from discrete to continuous time lead to a specific class of stochastic differential equations.
\\Under some basic assumptions, we make explicit quantum unitary extensions of those dynamical systems. We show that the quantum conditional expectation (the trace) of this unitary evolution over a state of the environment is a quantum extension of the primary Markov operator.
\\
Consequently, the main difficulty of our recipe lies in the existence of a classical dilation in the Attal-Gregoratti sense, that fulfills the condition to be extended. Focusing on the continuous-time case, we find sufficient conditions on a Markov semigroup to have such a dilation. Although they are only sufficient and not necessary conditions, we believe that they can highlight the probabilistic nature of the problem. Finally, putting altogether these conditions leads naturally  to the proof of the existence of a quantum extension for L\'evy processes.
\\

In practice, the stable algebra that appears in physical examples is the algebra generated by a self-adjoint operator of the Hilbert space, i.e. an observable of the system. One of the most famous example is obtained in the weak coupling or Van-Hoove limit, that was put in a rigorous mathematical language by Davies \cite{Dav1} (see also \cite{A-K}). There has already been a lot of studies on the QMS that arise in this limit (\cite{A-F-H} for example), called generic QMS. An other example of physically realizable Completely Positive map (CP map) with a stable commutative algebra is given by the composition with the Von Neumann measurement operator of any CP map. Those examples appear in the case where the observable generating the algebra has discrete and non-degenerate spectrum. Considering these particular cases, we provide a classification of subclassical CP maps. We also highlight a link between subclassical CP maps and quantum trajectories.
\\

This article is structures as followed. In Section \ref{sectdef} we introduce our notations and our definitions. The main results about quantum extensions are in Section \ref{sectquantext}, where we explain our recipe to find quantum extensions of classical dynamics, and apply it to L\'evy processes. Finally in Section \ref{sectdisret}, we focus on the simpler case where the stable commutative algebra is isomorphic to $\Acal=l^\infty(1,...,N)$ with $N$ either finite or infinite.

\label{sectintro}
\section{Notations and definitions}

In this section we give the set-up of our study. It is mainly the same set-up as described by Rebolledo in \cite{Reb1}, however we restrict ourselves to the case where the commutative algebra $\Acal$ is a Von Neumann algebra (and not only a $C^*$ algebra). In Subsection \ref{subsectdef1}, we recall the definitions of Markov quantum dynamics for open system and the kind of commutative subalgebra we will consider in the following. In Subsection \ref{subsectdef2}, we give our definitions of quantum dynamics with stable commutative subalgebra.

\label{sectdef}
\subsection{Quantum probability background for open quantum systems}

We are interested in quantum dynamics acting on the algebra $\Bcal(\Hcal)$ of bounded operator on a \emph{separable} Hilbert space $\Hcal$. This algebra physically corresponds to the set of the observables of a quantum system $\Hcal$. When the system is closed, in the Heisenberg point of view, evolution of observables is given by a group of $*$-homomorphism $(U_t^*\cdot U_t)_{t\geq0}$, where $(U_t)_t\geq0$ is a one parameter group of unitary operators. In this case, we shall talk about a \emph{quantum unitary evolution}.
\\However, in this article we focus on open quantum systems. Quantum Markov process (QMS) are defined as weakly*-continuous semigroups $(\Pcal_t)_{t\geq0}$ of completely positive maps (CP map) on $\Bcal(\Hcal)$, such that $\Pcal_t(\Ind)=\Ind$. Either the time is discrete or continuous, we will talk about quantum dynamics or simply dynamics on $\Bcal(\Hcal)$.
\\
\\\textit{In the following, all CP maps on $\Bcal(\Hcal)$ are normal (that is continuous for the $\sigma$-weak topology on $\Bcal(\Hcal)$) and identity preserving}.
\\

We shall be concerned with quantum dynamics that have a commutative Von Neumann subalgebra of $\Bcal(\Hcal)$ stable under there action.

\begin{de}
Let $\Acal$ be a commutative subalgebra of $\Bcal(\Hcal)$. We say that $\Acal$ is \emph{maximal} if it admits a cyclic vector i.e. a vector $\Omega\in\Hcal$ such that $\Acal\Omega$ is dense in $\Hcal$.
\label{demaximal}
\end{de}

\noindent If $(E,\Ecal,\mu)$ is a measured space and $f\in L^\infty(E,\Ecal,\mu)$ (we simply write $L^\infty(\mu)$), the multiplication operator $M_f$ on $L^2(\mu)$ is the bounded operator defined by

\begin{equation*}
\left(M_f\ g\right)(x)= f(x)g(x),\ g\in L^2(\mu),\ x\in E.
\end{equation*}

\begin{prop}[See \cite{R-S}]
Let $\Acal$ be a commutative subalgebra of $\Bcal(\Hcal)$. The following three assertions are equivalent:
\begin{itemize}
\item[(i)]$\Acal$ is maximal;
\item[(ii)]$\Acal$ is unitarily equivalent to the algebra of multiplication operators on a Hilbert space $L^2(E,\Ecal,\mu)$ for some measured space $(E,\Ecal,\mu)$;
\item[(iii)]$\Acal=\Acal'$, where $\Acal'$ stands for the commutant of $\Acal$.
\end{itemize}
\label{propvectcyclic}
\end{prop}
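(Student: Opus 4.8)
The plan is to prove the cycle $(i)\Rightarrow(ii)\Rightarrow(iii)\Rightarrow(i)$, the inclusion $\Acal\subseteq\Acal'$ in $(iii)$ being automatic from commutativity. For $(i)\Rightarrow(ii)$ I would start from a cyclic unit vector $\Omega$ and form the vector state $\omega(A)=\sca{\Omega}{A\Omega}$ on $\Acal$; it is faithful, since $\omega(A^*A)=\norm{A\Omega}^2$ vanishes only when $A$ annihilates the dense set $\Acal\Omega$. Realizing the commutative von Neumann algebra $\Acal$ as $C(X)$ with $X$ its Gelfand spectrum and transporting $\omega$ to a finite Radon measure $\mu$ on $X$ of full support, the GNS construction for $(\Acal,\omega)$ identifies $\Acal$, as a von Neumann algebra, with $L^\infty(X,\Ecal,\mu)$ acting by multiplication on $L^2(\mu)$. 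Concretely, one defines $V\colon L^2(\mu)\to\Hcal$ on the dense subspace $L^\infty(\mu)$ by $Vf=\widehat f\,\Omega$, where $\widehat f\in\Acal$ corresponds to $f$; then $\norm{Vf}^2=\omega(\bar f f)=\int|f|^2\,d\mu$ shows $V$ is isometric, its range contains the dense set $\Acal\Omega$, so $V$ is unitary, and $V^*\Acal V$ is precisely the algebra of all multiplication operators on $L^2(\mu)$.

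For $(ii)\Rightarrow(iii)$ it is enough to verify that on $L^2(E,\Ecal,\mu)$ the commutant of $\{M_f:f\in L^\infty(\mu)\}$ is this same algebra. Since $\Hcal$ is separable we may take $\mu$ to be $\sigma$-finite and fix $h\in L^2(\mu)$ with $h>0$ almost everywhere, so that $L^\infty(\mu)\,h$ is dense in $L^2(\mu)$. If $T$ commutes with every $M_f$, set $g=Th$; then $T(fh)=TM_fh=M_fTh=fg$ for all $f\in L^\infty(\mu)$, so $T$ acts as multiplication by $g/h$ on the dense subspace $L^\infty(\mu)\,h$. Testing $T$ against the indicators of the sets $\{|g/h|>n\}$ bounds $\norm{T}$ below by $n$ for every $n$ unless $g/h\in L^\infty(\mu)$, so $g/h\in L^\infty(\mu)$ and $T=M_{g/h}$ lies in the algebra.

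For $(iii)\Rightarrow(i)$, I would pick by Zorn's lemma a maximal family $(\xi_n)_n$ of unit vectors — countable, because $\Hcal$ is separable — whose cyclic subspaces $\Hcal_n:=\overline{\Acal\xi_n}$ are pairwise orthogonal; maximality forces $\bigoplus_n\Hcal_n=\Hcal$. Each $\Hcal_n$ is invariant under the $*$-algebra $\Acal$, hence reducing, so the orthogonal projection $p_n$ onto $\Hcal_n$ lies in $\Acal'=\Acal$. Then $\Omega:=\sum_n 2^{-n}\xi_n$ is cyclic: for $A\in\Acal$ one has $(p_nA)\Omega=p_nA\Omega=2^{-n}A\xi_n$, and $p_nA\in\Acal$, so $2^{-n}\Acal\xi_n\subseteq\Acal\Omega$ and hence $\overline{\Acal\Omega}\supseteq\Hcal_n$ for every $n$, giving $\overline{\Acal\Omega}=\Hcal$.

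The step $(iii)\Rightarrow(i)$ is routine once one observes that reducing projections for $\Acal$ belong to $\Acal=\Acal'$; the substance of the proposition is in the functional-analytic inputs — the multiplication-operator form of the spectral theorem for a commutative von Neumann algebra on a separable Hilbert space used in $(i)\Rightarrow(ii)$, and the $\sigma$-finiteness reduction together with the truncation estimate giving $g/h\in L^\infty$ in $(ii)\Rightarrow(iii)$. If one prefers to avoid invoking the $L^\infty$-model, the inclusion $\Acal'\subseteq\Acal$ under $(i)$ can be obtained directly: given self-adjoint $T\in\Acal'$ with $\norm{T}=c$, approximate $T\Omega$ by $A_m\Omega$ with self-adjoint $A_m\in\Acal$, replace $A_m$ by its truncation $\phi_c(A_m)\in\Acal$ at level $c$ — which still satisfies $\norm{\phi_c(A_m)\Omega-T\Omega}\le\norm{A_m\Omega-T\Omega}$ because $A_m$ commutes with $T$ and $\phi_c$ is $1$-Lipschitz — and conclude via the density of $\Acal\Omega$ and the uniform bound $\norm{\phi_c(A_m)}\le c$ that $\phi_c(A_m)\to T$ strongly, so that $T\in\Acal$ by strong closedness of $\Acal$ on bounded sets.
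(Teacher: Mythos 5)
The paper offers no proof of this proposition at all: it is quoted from Reed--Simon, so there is nothing internal to compare your argument with. Judged on its own, your cycle $(i)\Rightarrow(ii)\Rightarrow(iii)\Rightarrow(i)$ is the standard textbook proof and is correct: GNS with the faithful vector state for $(i)\Rightarrow(ii)$, the commutant computation for multiplication algebras for $(ii)\Rightarrow(iii)$, and the decomposition of the separable $\Hcal$ into countably many mutually orthogonal cyclic subspaces, with the reducing projections $p_n\in\Acal'=\Acal$, for $(iii)\Rightarrow(i)$. Two places silently invoke more than they state. First, in $(i)\Rightarrow(ii)$ the sentence ``the GNS construction identifies $\Acal$, as a von Neumann algebra, with $L^\infty(X,\mu)$'' is essentially the entire content of the step: what is immediate is only $V^*\Acal V=\{M_f:\ f\in C(X)\}$, and one must still argue that this weakly closed algebra is all of $\{M_g:\ g\in L^\infty(\mu)\}$ --- e.g.\ by the double commutant theorem combined with your paragraph-two commutant computation applied to $C(X)$ acting on $L^2(\mu)$ with cyclic vector $\Ind$. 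Making that dependence explicit would keep the step from reading as a quotation of the structure theorem it is meant to establish. Second, the reduction ``$\Hcal$ separable, hence $\mu$ may be taken $\sigma$-finite and $h>0$ a.e.\ exists'' is a genuine (if standard) reduction that deserves a sentence, since for a general measured space the multiplication algebra need not even be maximal abelian. In the optional closing argument, note also that the inequality $\norm{\phi_c(A_m)\Omega-T\Omega}\le\norm{A_m\Omega-T\Omega}$ requires the joint spectral representation of the commuting self-adjoint pair $(A_m,T)$ --- i.e.\ a version of the multiplication-operator model you were trying to avoid --- and that self-adjoint approximants $A_m$ can be chosen because $B-T$ is normal for $B\in\Acal$, which uses the commutativity of $\Acal$. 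None of these are errors, only points where nontrivial standard facts are used without acknowledgment.
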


\noindent This proposition tells us that whenever $\Acal$ is maximal, the Hilbert space $\Hcal$ and the algebra $\Acal$ can be identified respectively with $L^2(E,\Ecal,\mu)$ for some measured space $(E,\Ecal,\mu)$ and the algebra $L^\infty(E,\Ecal,\mu)$ of bounded functions on this measured space. Moreover, in the corresponding physical applications, the algebra which is stable under the evolution of the system is often $\Acal'$. This algebra is commutative if and only if it is equal to $\Acal$. Thus the assumption that $\Acal$ is maximal is also needed in order to obtain a classical interpretation of the physically stable algebra.
\\
\\\emph{In the following we will always assume that $\Acal$ is maximal, and consequently has a cyclic vector $\Omega$, which we choose to be norm $1$.}

\label{subsectdef1}
\subsection{Preliminary definitions}

We begin with the definition of \emph{subclassical dynamics}. The system is modeled by the Hilbert space $\Hcal=L^2(E,\Ecal,\mu)$ for some measured space $(E,\Ecal,\mu)$. The commutative subalgebra $\Acal$ of $\Bcal(\Hcal)$ is the algebra of multiplication operators by bounded functions on $E$:

\begin{equation}
\Acal=\left\{M_f,\ f\in L^\infty(\mu)\right\}.
\label{eqalgmultop}
\end{equation}

\noindent Thus $\Acal$ can be identified with $L^\infty(\mu)$.

\begin{de}[Subclassical CP map]
A completely positive map $\Lcal$ on $\Bcal(\Hcal)$ is called \emph{$\Acal$-subclassical} if $\Acal$ is stable under the action of $\Lcal$, i.e.
\begin{equation*}
\Lcal(\Acal)\subset\Acal.
\end{equation*}
\label{defSCCP}
\end{de}

\begin{de}[Subclassical QMS]
A quantum Markov semigroup $(\Pcal_t)_{t\geq0}$ is called \emph{$\Acal$-subclassical} if $\Acal$ is a stable subalgebra of $(\Pcal_t)_{t\geq0}$, i.e.
\begin{equation*}
\Pcal_t(\Acal)\subset\Acal \text{ for all }t\geq0.
\end{equation*}
\label{descQMS}
\end{de}

\noindent Note that if $\Lcal$ is a $\Acal$-subclassical CP map, there obviously exists a linear operator $L$ acting on $L^\infty(\mu)$ such that for all $f\in L^\infty(\mu)$:

\begin{equation*}
\Lcal(M_f)=M_{Lf}.
\end{equation*}

\noindent If furthermore $\Lcal$ is an identity preserving and normal CP map, it is easy to verify that $L$ is a Markov operator on $L^\infty(\mu)$, that is, it satisfies:

\begin{itemize}
\item[i)]$L$ is an operator on $L^\infty(\mu)$;
\item[ii)]$L\Ind=\Ind$;
\item[iii)]$Lf\geq0$ whenever $f\geq0$;
\item[iv)]$f\mapsto Lf$ is $\sigma$-weakly continuous.
\end{itemize}

\noindent In the same way, if $(\Pcal_t)_{t\geq0}$ is a $\Acal$-subclassical normal QMS, then there exists a Markov semigroup $(P_t)_{t\geq0}$ on $L^\infty(\mu)$, i.e. $P_t$ are Markov operators for all $t\geq0$, $P_0f=f$ on $L^\infty(\mu)$, $t\mapsto P_t$ is strongly continuous on $L^\infty(\mu)$ and for all $f\in L^\infty(\mu)$:

\begin{equation*}
\Pcal_t(M_f)=M_{P_tf}\text{ for all }t\geq0.
\end{equation*}
\\

In the next section, we focus on the following problem: given a Markov dynamics, either in discrete or continuous time, is it the restriction of a quantum dynamics on a stable commutative algebra? We shall use the following definition, which naturally extend to the continuous case.

\begin{de}
Let $(E,\Ecal,\mu)$ be a probability space, and let $L$ be a Markov operator on $L^\infty(\mu)$. A CP map $\Lcal$ on $\Bcal(L^2(\mu))$ is called a \textbf{quantum extension} of $L$ if for all $f\in L^\infty(\mu)$,
\begin{equation}
\Lcal(M_f)=M_{Lf},
\label{eqdefquantumext}
\end{equation}
\noindent i.e. $\Lcal$ is $\Acal$-subclassical with $\Acal=\{M_f,\ f\in L^\infty(\mu)\}$ and the classical restriction of $\Lcal$ to $\Acal$ is $L$.
\label{dequantumext}
\end{de}

\label{subsectdef2}
\section{Subclassical dynamics: the dynamical system point of view}

In this section we show how quantum extensions of Markov operators can be obtained by a general scheme. This scheme relies on the following remarks:

\begin{itemize}
\item[1)] it is easy to give conditions in order to find quantum extensions of dynamical systems;
\item[2)] dynamical systems naturally appear as dilation of Markov operators in a certain form;
\item[3)] finally one can recover a quantum extension of a Markov operator via the quantum extension of its dilation.
\end{itemize}

\begin{tikzpicture}
\node (Q1) at (-3,1) {Quantum extension of the};
\node (Q2) at (-3,0.5) {Markov operator};
\node (P1) at (6,1) {Quantum extension of the};
\node (P2) at (6,0.5) {dynamical system};
\node (R) at (-3,-1) {Markov operator};
\node (L) at (6,-1) {Dynamical system};
\node at (6.2,-0.3) {1)};
\node at (2,1.2) {3)};
\node at (2,-0.8) {2)};
\draw[->,>=latex] (P1)--(Q1);
\draw[->,>=latex] (R)--(Q2);
\draw[->,>=latex] (L)--(P2);
\draw[->,>=latex] (R)--(L);
\end{tikzpicture}

In Subsection \ref{subsectquantext1}, we focus on point 1). We find conditions so that a dynamical system admits a quantum \emph{unitary} extension and under these conditions we completely characterize such possible extensions. In Subsection \ref{subsectquantext2}, we explain point 2) and 3) and we prove that the quantum extension does not depend on the choice of the classical dilation in Subsection \ref{subsectquantext3}. In Subsection \ref{subsectquantext4} we apply our recipe for the case of finite states spaces.
\\Finally in Subsection \ref{subsectquantext5} we treat the case of continuous time Markov semigroups, which allows us to prove the existence of a quantum extension for L\'evy processes in Subsection \ref{subsectquantext6}.

\label{sectquantext}
\subsection{Subclassical unitary evolutions as quantum extensions of dynamical systems}

In this section we look for conditions on a dynamical system that allow the existence of its quantum extension. We first emphasize in Proposition \ref{propevolunitaire} that unitary evolutions are natural candidates for quantum extension of dynamical system. This leads us to find appropriate conditions on a dynamical system to admits a quantum extension given by a unitary evolution (that is, a quantum unitary extension). That is the aim of Proposition \ref{propunitact}. Finally in Proposition \ref{propunicitesubclassunitary} we characterize all quantum unitary extensions.
\\

A dynamical system is a quadruplet $(E,\Ecal,\mu,T)$, where $(E,\Ecal,\mu)$ is some measured space, and $T:E\to E$ is a $\Ecal$-measurable function. It can be seen as a very particular Markov operator. Indeed, we will say that a Markov operator $L$ on the measured space $(E,\Ecal,\mu)$ is deterministic if there exist a measurable function $T$ on $(E,\Ecal,\mu)$ such that for all $f\in L^\infty(\mu)$, we have:

\begin{equation*}
Lf=f\circ T.
\end{equation*}

\begin{prop}
Suppose $(E,\Ecal,\mu)$ is a Lusin space. Then the restriction of an $\Acal$-subclassical unitary evolution on $L^2(\mu)$ to its classical part is deterministic.
\label{propevolunitaire}
\end{prop}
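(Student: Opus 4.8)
The plan is to exploit that conjugation by a unitary is a \emph{$*$-homomorphism}. Consequently the classical restriction $L$ of the evolution is not merely a Markov operator on $L^\infty(\mu)$ but a normal, unital, \emph{multiplicative} one, and the statement will follow once one knows that every such homomorphism of $L^\infty(\mu)$ into itself is implemented by a point transformation. This last fact is false for general measured spaces, and it is exactly here that the hypothesis that $(E,\Ecal,\mu)$ be Lusin enters.

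First I would write the $\Acal$-subclassical unitary evolution as $\Lcal=U^*\cdot\,U$ and let $L$ on $L^\infty(\mu)$ be its classical restriction, so that $\Lcal(M_f)=M_{Lf}$. Since $M_{fg}=M_fM_g$ and $\Lcal$ is a normal unital $*$-homomorphism of $\Bcal(\Hcal)$, one gets $M_{L(fg)}=\Lcal(M_f)\Lcal(M_g)=M_{(Lf)(Lg)}$, i.e. $L(fg)=(Lf)(Lg)$; thus $L$ is a normal unital $*$-homomorphism of $L^\infty(\mu)$ into itself (not necessarily injective, which is precisely why the transformation $T$ produced below need not be invertible). Applying $L$ to indicators, for $A\in\Ecal$ the element $\Ind_A$ is a projection of $L^\infty(\mu)$, hence $L\Ind_A$ is a positive idempotent bounded by $\Ind$, so $L\Ind_A=\Ind_{\theta(A)}$ for some $\theta(A)\in\Ecal$ determined up to a $\mu$-null set. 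From $L\Ind=\Ind$, $L(\Ind-\Ind_A)=\Ind-L\Ind_A$ and $L(\Ind_A\Ind_B)=L\Ind_A\cdot L\Ind_B$ the map $\theta$ descends to a Boolean homomorphism of the measure algebra of $(E,\Ecal,\mu)$, and normality of $L$ (i.e. $\sigma$-weak continuity, already recorded as property iv) promotes it to a $\sigma$-homomorphism, since $\Ind_{A_n}\uparrow\Ind_{\bigcup_nA_n}$ $\sigma$-weakly for increasing $(A_n)$.

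The crux is then the realization step: on a Lusin space every $\sigma$-homomorphism of the measure algebra into itself is induced by a measurable point map. I would either cite this (it is of the type proved by Attal in \cite{Att1}) or argue directly: fix a countable algebra $(A_n)_{n\geq1}$ generating $\Ecal$ and separating points, embed $E$ Borel-isomorphically into $\{0,1\}^{\Nbb}$ via $x\mapsto(\Ind_{A_n}(x))_n$ with Borel image $F$ (Lusin--Souslin), pick representatives $B_n$ of $\theta(A_n)$, and let $T(x)$ be the unique point of $E$ with coordinates $(\Ind_{B_n}(x))_n$ whenever that sequence lies in $F$; the countably many Boolean relations among the $A_n$ cutting out $F$ are carried by $\theta$, so $(\Ind_{B_n}(x))_n\in F$ for $\mu$-a.e. $x$, and $T$ is well defined and measurable off a $\mu$-null set with $T^{-1}(A)=\theta(A)$ mod $\mu$ for all $A\in\Ecal$. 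After extending $T$ arbitrarily on the exceptional null set this gives $\Ind_A\circ T=\Ind_{T^{-1}(A)}=L\Ind_A$ in $L^\infty(\mu)$ for every $A$. I expect this to be the main obstacle, both in pinning down the cleanest reference and in the measure-theoretic bookkeeping that makes $T$ genuinely defined almost everywhere; the rest is soft.

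Finally, both $f\mapsto Lf$ and $f\mapsto f\circ T$ are linear and contractive for $\|\cdot\|_\infty$ (the first because $L$ is Markov, the second trivially), and by the previous line they coincide on simple functions; since simple functions are uniformly dense in $L^\infty(\mu)$ they coincide on all of $L^\infty(\mu)$. Hence $Lf=f\circ T$ for every $f\in L^\infty(\mu)$, that is, $L$ is deterministic, which is the assertion of Proposition \ref{propevolunitaire}.
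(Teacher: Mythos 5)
Your proof is correct and follows essentially the same route as the paper: observe that conjugation by a unitary makes the classical restriction $L$ a normal unital $*$-homomorphism of $L^\infty(\mu)$, then invoke the fact that on a Lusin space every such multiplicative Markov operator is induced by a point transformation. The only difference is that the paper simply cites this last realization result (due to Attal in \cite{Att1}), whereas you sketch its proof via the measure-algebra $\sigma$-homomorphism and the embedding into $\{0,1\}^{\Nbb}$ --- a sketch that is sound, modulo the standard inner-regularity argument needed to show the image sequence lands in the Borel image almost everywhere.
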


\begin{proof}
Let $U$ be a unitary operator on $L^2(\mu)$. We suppose that the corresponding unitary evolution is $\Acal$-subclassical, i.e. there exists a Markov operator $L$ on $L^\infty(\mu)$ such that for all $f\in L^\infty(\mu)$, 

\begin{equation*}
U^*M_fU=M_{Lf}.
\end{equation*}

\noindent For all $f,g\in L^\infty(\mu)$, we have $M_{Lf}M_{Lg}=M_{Lfg}$ and $M_{L\overline{f}}=M_{\overline{Lf}}$. Consequently, $L$ is a $*$-homomorphism of the algebra $L^\infty(\mu)$. As was shown by Attal in \cite{Att1}, under the hypothesis that $(E,\Ecal,\mu)$ is a Lusin space, Markov operators that are $*$-homomorphism of the algebra $L^\infty(\mu)$ are exactly the deterministic ones.
\end{proof}

This proposition tells us that if $U$ is a unitary operator on $\Hcal$ and if the corresponding evolution on $\Bcal(\Hcal)$ is $\Acal$-subclassical, then there exists a measurable application $T:E\to E$ such that for every $f\in L^\infty(\mu)$,

\begin{equation}
U^*M_fU=M_{f\circ T},
\label{eqlienUnitsystdyn}
\end{equation}

\noindent i.e. $U^*\cdot U$ is a quantum extension of the dynamical system $(E,\Ecal,\mu,T)$. Consequently from Proposition \ref{propevolunitaire} unitary evolutions appears as natural candidates for quantum extensions of dynamical systems. This remark is made effective in the next proposition.

\begin{prop}
Let $(E,\Ecal,\mu,T)$ be a dynamical system. Assume that $T$ is invertible and $\mu$-preserving. Then the operator $U_T:L^2(\mu)\to L^2(\mu)$ defined by
\begin{equation}
U_Tf=f\circ T,
\label{eqpropunitact1}
\end{equation}
\noindent is a unitary operator, and we have $U_T^*=U_{T^{-1}}$. Moreover, for all $f\in L^\infty(\mu)$, we have
\begin{equation}
U_TM_fU_{T^{-1}}=M_{f\circ T}.
\label{eqpropunitact2}
\end{equation}
\label{propunitact}
\end{prop}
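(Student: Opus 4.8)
The plan is to verify the three claims in order: that $U_T$ is well-defined and unitary, that its adjoint is $U_{T^{-1}}$, and that the conjugation formula \eqref{eqpropunitact2} holds. First I would check that $f \mapsto f \circ T$ is a well-defined isometry of $L^2(\mu)$: since $T$ is $\mu$-preserving, the change of variables formula gives $\int_E |f \circ T|^2 \, d\mu = \int_E |f|^2 \, d\mu$ for every $f \in L^2(\mu)$, so $U_T$ is norm-preserving; in particular it is well-defined (independent of the $\mu$-a.e. representative) and injective. This uses only the measure-preserving property, not invertibility.

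Next I would use invertibility of $T$ to produce the inverse. Since $T$ is invertible with $\mu$-preserving (measurable) inverse $T^{-1}$, the operator $U_{T^{-1}} f = f \circ T^{-1}$ is likewise a well-defined isometry, and $U_T U_{T^{-1}} f = (f \circ T^{-1}) \circ T = f$ and similarly $U_{T^{-1}} U_T = \mathrm{Id}$. Hence $U_T$ is surjective, so it is a unitary operator with $U_T^{-1} = U_{T^{-1}}$. To identify the adjoint, I would compute $\sca{U_T f}{g} = \int_E \overline{f \circ T}\, g \, d\mu = \int_E \overline{f}\, (g \circ T^{-1}) \, d\mu = \sca{f}{U_{T^{-1}} g}$, again by the change of variables $x \mapsto T^{-1}(x)$; this shows $U_T^* = U_{T^{-1}}$, which combined with the previous line confirms unitarity.

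Finally, for the conjugation identity, I would take $f \in L^\infty(\mu)$ and $g \in L^2(\mu)$ and simply compute pointwise: $\left(U_T M_f U_{T^{-1}} g\right)(x) = \left(M_f U_{T^{-1}} g\right)(T(x)) = f(T(x)) \cdot \left(U_{T^{-1}} g\right)(T(x)) = f(T(x)) \cdot g(T^{-1}(T(x))) = (f \circ T)(x) \, g(x) = \left(M_{f \circ T} g\right)(x)$. Since $g$ was arbitrary, $U_T M_f U_{T^{-1}} = M_{f \circ T}$, which is \eqref{eqpropunitact2}.

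None of the steps presents a serious obstacle — this is essentially a bookkeeping exercise with the change-of-variables formula. The only point requiring a little care is the well-definedness on equivalence classes: one must note that $T$ being $\mu$-preserving ensures $f = 0$ $\mu$-a.e. implies $f \circ T = 0$ $\mu$-a.e., so that $U_T$ descends to $L^2(\mu)$ rather than merely acting on honest functions; the same remark applies to $U_{T^{-1}}$ and is what forces us to assume $T$ (and hence $T^{-1}$) measure-preserving rather than just measurable. I would state this explicitly at the outset and then let the computations above run.
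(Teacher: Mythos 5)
Your proof is correct and follows essentially the same route as the paper: the conjugation identity $U_T M_f U_{T^{-1}} = M_{f\circ T}$ is established by the same pointwise computation on an arbitrary $g\in L^2(\mu)$. The only difference is that you spell out the unitarity of $U_T$ and the identification $U_T^*=U_{T^{-1}}$ via the change-of-variables formula, whereas the paper simply cites this first part as Koopman's Lemma.
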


\begin{proof}
The first part of the proposition is the well-known Koopman's Lemma (see \cite{R-S}). We prove the second part.
For all $f\in L^\infty(\mu)$ and $g\in L^2(\mu)$ we have:
\begin{align*}
U_TM_fU_{T^{-1}}g
& = U_T M_f (g\circ T^{-1}) \\
& = U_T \left[f \left(g\circ T^{-1}\right)\right] \\
& = [f (g\circ T^{-1})]\circ T \\
& = (f\circ T) g \\
& = M_{f\circ T}g.
\end{align*}
\end{proof}

\noindent Thus, under the condition that $T$ is invertible and $\mu$-preserving, the unitary evolution given by $U_{T^{-1}}$ is a quantum extension of $T$. From now on, this choice of a quantum extension will be called the \emph{canonical quantum extension}. We end this section by a characterization of all quantum unitary extensions of invertible and measure-preserving dynamical systems.

\begin{prop}
Let $T$ be an invertible and $\mu$-preserving dynamical system on $(E,\Ecal,\mu)$. Let $U=U_{T^{-1}}$ be the canonical quantum extension of $T$. Let $V$ be a unitary operator on $L^2(\mu)$. Then the following two assertions are equivalent.
\begin{itemize}
\item[(i)] For all $f\in L^\infty(\mu)$, $V^*M_fV=M_{f\circ T}$.
\item[(ii)] There exists $g\in L^\infty(\mu)$ such that $V=M_g\ U$.
\end{itemize}
Furthermore, if this is realized, then $g$ in (ii) is such that $|g|^2=1$ $\mu$-almost surely. 
\label{propunicitesubclassunitary}
\end{prop}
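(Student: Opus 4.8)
The plan is to prove the two implications separately and then read off the modulus condition at the end. Throughout I would use one auxiliary identity: since $U=U_{T^{-1}}$, Proposition \ref{propunitact} gives $U_{T}^{*}=U_{T^{-1}}=U$, hence $U^{*}=U_{T}$, and then \eqref{eqpropunitact2} yields $U^{*}M_{f}U=U_{T}M_{f}U_{T^{-1}}=M_{f\circ T}$ for every $f\in L^{\infty}(\mu)$. In other words $U$ itself satisfies assertion (i) — which is of course just the statement that $U$ is the canonical quantum extension of $T$, already established above.

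For (ii) $\Rightarrow$ (i), I would start from $V=M_{g}U$ with $g\in L^{\infty}(\mu)$. Since $U$ is unitary, $VV^{*}=M_{g}UU^{*}M_{g}^{*}=M_{g}M_{\overline{g}}=M_{|g|^{2}}$, and as $V$ is assumed unitary this forces $|g|^{2}=1$ $\mu$-almost surely. Then $V^{*}M_{f}V=U^{*}M_{\overline{g}}M_{f}M_{g}U=U^{*}M_{|g|^{2}f}U=U^{*}M_{f}U=M_{f\circ T}$, which is (i). Note that this argument already delivers the ``furthermore'' part in the case at hand.

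For (i) $\Rightarrow$ (ii), set $W\deq VU^{*}$, which is unitary. Combining (i) for $V$ with the auxiliary identity for $U$, for every $f\in L^{\infty}(\mu)$ one gets $W^{*}M_{f}W=U V^{*}M_{f}V U^{*}=U M_{f\circ T}U^{*}=U\,(U^{*}M_{f}U)\,U^{*}=M_{f}$, where I used $M_{f\circ T}=U^{*}M_{f}U$. Since $W$ is unitary, $W^{*}M_{f}W=M_{f}$ is equivalent to $WM_{f}=M_{f}W$, so $W$ lies in the commutant $\Acal'$. Because $\Acal$ is maximal, Proposition \ref{propvectcyclic} gives $\Acal'=\Acal$, hence $W=M_{g}$ for some $g\in L^{\infty}(\mu)$, and therefore $V=WU=M_{g}U$, which is (ii). Finally $M_{g}=W$ is unitary as a product of unitaries, so $M_{|g|^{2}}=M_{g}M_{g}^{*}=WW^{*}=\Ind$, i.e. $|g|^{2}=1$ $\mu$-almost surely, establishing the last assertion.

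The computations are routine; the one substantive step is the passage $W\in\Acal'\Rightarrow W\in\Acal$, which is precisely where the maximality hypothesis on $\Acal$ (Proposition \ref{propvectcyclic}(iii)) is used, together with the small bookkeeping that $U=U_{T^{-1}}$ has adjoint $U_{T}$ so that conjugation by $U$ implements $f\mapsto f\circ T$ in the correct direction. I do not expect any real obstacle beyond keeping these adjoints straight.
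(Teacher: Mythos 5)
Your proof is correct and follows essentially the same route as the paper: both reduce everything to the observation that (ii) means $VU^*\in\Acal$, that (i) is equivalent to $VU^*$ commuting with every $M_f$, and that maximality gives $\Acal'=\Acal$ via Proposition \ref{propvectcyclic}(iii). You merely split the chain of equivalences into two implications and write out the modulus computation that the paper dismisses as straightforward; the bookkeeping $U^*=U_T$ is handled correctly.
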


\begin{proof}
The last point of the theorem is a straightforward computation.
\\Recall that $\Acal$ is the algebra of multiplication operators by bounded functions. Then (ii) is equivalent to $VU^*\in\Acal=\Acal'$, which itself is equivalent to $VU^*M_f=M_fVU^*$ for all $f\in L^\infty(\mu)$. Multiplying to the right by $V^*$ and to the left by $U$, we get that (ii) is equivalent to $V^*M_fV=U^*M_fU$ for all $f\in L^\infty(\mu)$, which is (i).
\end{proof}

Consequently quantum extensions of an invertible and measure-preserving dynamical system $(E,\Ecal,\mu T)$ are in one-to-one correspondence with elements $g\in L^\infty(\mu)$ of constant modulus $1$.

\label{subsectquantext1}
\subsection{Subclassical CP maps as quantum extensions of classical dilations}

As was shown by Gregoratti in \cite{Gr1} for the case of finite spaces and latter by Attal in \cite{Att1} in the general case, classical dynamical systems appear as natural dilations of Markov operators: the latters can be written as the "trace" of a deterministic evolution on a larger space.
\\As a second step of our recipe, in this section we show that the trace over a state of the environment of a unitary quantum extension leads to a quantum extension of the Markov operator.
\\First we recall a result obtain by Attal in \cite{Att1}.
\\ 

Let $(E,\Ecal,\mu)$ be a measured space, $\mu$ being not necessarily finite. This space stands for the small system. Let $(F,\Fcal,\nu)$ be a probability space, which stands for the environment. Let $T$ be an $\Ecal\otimes\Fcal$-measurable application from $E\times F$ to $E\times F$. The space $(E\times F,\Ecal\otimes\Fcal,\mu\otimes\nu,T)$ is a dynamical system, which describes the evolution of the whole system.
\\
\\In the Heisenberg description of time evolution of a physical system, what we consider is the evolution of observables rather than the evolution of the state of the system. In the commutative case, observables are functions $h\in L^\infty(\mu\times \nu)$ and their evolution is given by the operator $\tilde{T}$ on $L^\infty(\mu\otimes \nu)$ defined by $\Tilde{T}:h\mapsto h\circ T$.
\\
\\Suppose now that we only have access to the space $E$ and the action of the environment is only given by the expectation over the measure $\nu$ on $(F,\Fcal)$. Starting at a deterministic point $x\in E$ and at a random point $y\in F$ with law $\nu$, one step evolution of an observable $f\in L^\infty(\mu)$ is given by the operator $L$ on $L^\infty(\mu)$ given by:

\begin{equation}
Lf(x)=\int_{F}{\tilde{T}(f\otimes\Ind)(x,y)d\nu(y)},\ f\in L^\infty(\mu),\ x\in E.
\label{eqclassdil}
\end{equation}

\begin{prop}[Theorem 2.2 in \cite{Att1}]
The operator $L$ is a Markov operator on $L^\infty(E)$.
\\Conversely, for any Markov operator on a \emph{Lusin} space $(E,\Ecal,\mu)$, one can find a measurable space $(F,\Fcal)$, a probability measure $\nu$ on $F$ and an invertible dynamical system $T:E\times F\to E\times F$ such that $L$ is given by Equation (\ref{eqclassdil}).
\\Thus a Markov dynamics can always be seen as the restriction to a subsystem of a deterministic dynamics. 
\label{propclassdil}
\end{prop}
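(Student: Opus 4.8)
The statement has two halves.

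\textbf{$L$ is a Markov operator.} The plan for this (easy) direction is to write $L=\Ebb_\nu\circ\tilde T\circ\iota$, where $\iota\colon L^\infty(\mu)\to L^\infty(\mu\otimes\nu)$ is the inclusion $f\mapsto f\otimes\Ind$, $\tilde T h=h\circ T$, and $\Ebb_\nu\colon L^\infty(\mu\otimes\nu)\to L^\infty(\mu)$ is the partial integration $(\Ebb_\nu h)(x)=\int_F h(x,y)\,d\nu(y)$, and then to verify the four axioms of a Markov operator factor by factor. Positivity is clear since all three maps are positive; $L\Ind=\Ind$ because $\Ind\circ T=\Ind$ and $\nu(F)=1$; $\|Lf\|_\infty\le\|f\|_\infty$ because each factor is an $L^\infty$-contraction, which in particular shows $Lf\in L^\infty(\mu)$; and measurability of $Lf$ follows from Tonelli applied to the $\Ecal\otimes\Fcal$-measurable integrand $\tilde T(f\otimes\Ind)$. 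For $\sigma$-weak continuity, $\iota$ is normal, $\Ebb_\nu$ is a normal slice map (its predual on $L^1(\mu)$ being $\psi\mapsto\psi\otimes\Ind$), and — under the non-singularity hypothesis on $T$ that is anyway needed for $\tilde T$ to be well defined on $L^\infty$-classes — $\tilde T$ is a normal unital $*$-homomorphism (its predual being the transfer operator on $L^1(\mu\otimes\nu)$); a composition of normal maps is normal.

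\textbf{Every Markov operator on a Lusin space arises this way.} I would proceed in three steps. Step~1: since $(E,\Ecal,\mu)$ is Lusin, hence standard Borel, represent $L$ by a genuine Markov kernel $p\colon E\times\Ecal\to[0,1]$ with $Lf(x)=\int_E f(z)\,p(x,dz)$ for $\mu$-a.e.\ $x$; concretely, form the measure $\rho$ on $E\times E$ determined by $\rho(A\times B)=\int_A (L\Ind_B)\,d\mu$, note its first marginal is $\mu$, and disintegrate it — the standard Borel structure is exactly what makes this regular conditional probability exist, and what lets one pass from the a.e.-defined operator $L$ to an everywhere-defined kernel. Step~2: still using that $E$ is standard Borel, realize $p$ through uniform randomness, i.e.\ find a jointly measurable $g\colon E\times[0,1]\to E$ with $g(x,\cdot)_{*}\lambda=p(x,\cdot)$ for every $x$ ($\lambda$ = Lebesgue measure on $[0,1]$; take a measurable choice of generalized quantile functions of $p(x,\cdot)$). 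Then $Lf(x)=\int_{[0,1]}f(g(x,u))\,du$.

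Step~3 is where the real work lies: upgrading the ``half dynamical system'' $(x,u)\mapsto g(x,u)$, which forgets the coordinate $u$ and — because $L$ is not deterministic — also part of $x$, into an honestly invertible $T$. I would take $F=[0,1]^{\Zbb}$, $\nu=\lambda^{\otimes\Zbb}$, fix a Borel isomorphism $\psi\colon E\times[0,1]\to[0,1]$ of standard Borel spaces, and set
\[
T\big(x,(u_k)_{k\in\Zbb}\big)=\big(g(x,u_0),(v_k)_{k\in\Zbb}\big),\qquad v_{-1}=\psi(x,u_0),\quad v_k=u_{k+1}\ (k\neq-1).
\]
Then $T$ is bimeasurable and invertible — from $\big(x',(v_k)\big)$ one reads $(x,u_0)=\psi^{-1}(v_{-1})$ and $u_k=v_{k-1}$ for $k\neq0$ — while, $\nu$ being a product of copies of $\lambda$, $\int_F(f\otimes\Ind)(T(x,y))\,d\nu(y)=\int_{[0,1]}f(g(x,u_0))\,du_0=Lf(x)$, which is (\ref{eqclassdil}). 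Note $T$ need not be measure preserving (the coordinate $v_{-1}$ is distributed as $\psi_{*}(\mu\otimes\lambda)$), consistently with the statement, which asks only for invertibility.

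\textbf{Main obstacle.} The honest part is Step~3: one must absorb into the environment the information that a non-deterministic step necessarily discards, \emph{without overwriting it} (which forces an infinite-dimensional, path-space environment) and without perturbing the $E$-marginal dynamics, all while keeping $\nu$ a fixed product measure independent of the starting state. Steps~1--2 are standard facts about standard Borel spaces; the bookkeeping in Step~3 — joint measurability, bijectivity, and measurability of the inverse — is precisely what Theorem~2.2 of \cite{Att1} carries out, and I would defer to it for the complete details.
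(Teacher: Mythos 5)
The paper offers no proof of this proposition --- it is quoted directly as Theorem 2.2 of \cite{Att1} --- so your argument can only be judged on its own terms. The first half and Steps 1--2 of the converse are sound (modulo the usual null-set care in making the quantile representation land in $E$). The genuine gap is in Step 3: the map $T$ you construct is injective but \emph{not surjective}, hence not an invertible dynamical system of $E\times F$. Indeed, the $E$-coordinate of the output, $g(x,u_0)$, is entirely determined by the recorded coordinate $v_{-1}=\psi(x,u_0)$, so the image of $T$ is the graph $\bigl\{(x',(v_k)_k)\ :\ x'=g(\psi^{-1}(v_{-1}))\bigr\}$, a proper subset of $E\times F$ as soon as $E$ has more than one point. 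The recipe you give for reading off the preimage is only a left inverse; applied to a point outside this graph it returns some $(x,(u_k)_k)$ with $T(x,(u_k)_k)\neq(x',(v_k)_k)$. This is not a cosmetic defect: the whole difficulty of the cited theorem is to produce a genuine bijection of a \emph{fixed} product space carrying a \emph{fixed} product measure $\mu\otimes\nu$, and the naive ``record what you erase'' device always overshoots in exactly this way, because the record makes the new state redundant. Your closing paragraph correctly locates the obstacle and defers to \cite{Att1} for it, but the construction you actually wrote down and asserted to be ``bimeasurable and invertible'' does not clear it.

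A secondary point: even if surjectivity were patched, you deliberately give up measure preservation, whereas everything the present paper does with this dilation downstream --- the unitary $U_{T^{-1}}$ of Proposition \ref{propunitact}, and the uniqueness statement of Theorem \ref{theominimaldilation} --- requires $T$ to be both invertible \emph{and} $\mu\otimes\nu$-preserving (which, for the shift-type constructions, forces $\mu$ to be $L$-invariant, an hypothesis the paper indeed imposes in continuous time as condition A1). An honest repair of Step 3 goes through either Attal's explicit construction or the natural-extension theorem of ergodic theory, followed by the genuinely nontrivial verification that the extension can be re-coordinatized as a product $(E\times F,\mu\otimes\nu)$ with $\nu$ independent of the initial state; that verification is the substance of the cited result, not bookkeeping.
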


We now come back to the quantum setup. The small system is the Hilbert space $\Hcal=L^2(E,\Ecal,\mu)$ and the environment is the Hilbert space $\Kcal=L^2(F,\Fcal,\nu)$. The only information we have on the environment is given by a state $\omega$ on $\Kcal$. This state defines on $(F,\Fcal)$ a probability measure $\nu_\omega$ via :

\begin{equation}
\nu_\omega(J)=\Tr[\omega M_{\Ind_J}],
\label{eqmestrace}
\end{equation}

\noindent where $M_{\Ind_J}$ is the multiplication operator by the characteristic function $\Ind_J$ of the set $J\in\Fcal$.

\begin{theo}
Let $T:E\times F\to E\times F$ be a dynamical system. Suppose that there exists a quantum extension of $T$ on $\Hcal\otimes\Kcal$ given by a unitary operator $U$ on $\Hcal\otimes\Kcal$, that is, for all $h\in L^\infty(E\times F)$ we have
\begin{equation*}
U^*M_hU=M_{h\circ T}.
\end{equation*}
\noindent Let $\omega$ be a state on $\Kcal$. Then the application $\Lcal_\omega:\Bcal(\Hcal)\to\Bcal(\Hcal)$ defined by
\begin{equation}
\Lcal_\omega(X)=\Tr_{\omega}[U^*(X\otimes I)U],
\label{eqtheoQclassdil1}
\end{equation}
is a $\Acal$-subclassical CP map on $\Bcal(\Hcal)$, such that the associated Markov operator $L_\omega$ is given by
\begin{equation}
L_\omega f(x)=\int_{F}{(f\otimes\Ind)\circ T(x,y)d\nu_\omega(y)},\ f\in L^\infty(E).
\label{eqtheoQclassdil2}
\end{equation}
\label{theoQclassdil}
\end{theo}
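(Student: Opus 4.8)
The plan is to exhibit $\Lcal_\omega$ as a composition of elementary maps and then to compute its action on $\Acal$. Write $\Tr_\omega:\Bcal(\Hcal\otimes\Kcal)\to\Bcal(\Hcal)$ for the partial trace against $\omega$, i.e. the unique map determined by $\Tr_\Hcal[\rho\,\Tr_\omega[Y]]=\Tr_{\Hcal\otimes\Kcal}[(\rho\otimes\omega)Y]$ for every trace-class $\rho$ on $\Hcal$; as the pre-adjoint of $\rho\mapsto\rho\otimes\omega$ it is normal and completely positive, and it is identity-preserving because $\Tr\,\omega=1$. First I would record the factorisation
\[
\Lcal_\omega=\Tr_\omega\circ\mathrm{Ad}_{U^*}\circ(\,\cdot\otimes I),
\]
in which $X\mapsto X\otimes I$ and $Y\mapsto U^*YU$ are normal identity-preserving $*$-homomorphisms (the second because $U$ is unitary), hence normal identity-preserving CP maps. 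Since a composition of normal identity-preserving CP maps is again one, this already shows that $\Lcal_\omega$ is a normal, identity-preserving CP map on $\Bcal(\Hcal)$, in accordance with the standing conventions.

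It then remains to prove that $\Acal$ is stable under $\Lcal_\omega$ and that the associated classical restriction is given by (\ref{eqtheoQclassdil2}). Under the identification $L^2(\mu)\otimes L^2(\nu)\cong L^2(\mu\otimes\nu)$ one has $M_f\otimes I=M_{f\otimes\Ind}$ for $f\in L^\infty(\mu)$, with $(f\otimes\Ind)(x,y)=f(x)$; hence the unitary-extension hypothesis gives $U^*(M_f\otimes I)U=M_h$ with $h:=(f\otimes\Ind)\circ T\in L^\infty(\mu\otimes\nu)$. So everything reduces to the claim that, for every $h\in L^\infty(\mu\otimes\nu)$,
\[
\Tr_\omega[M_h]=M_{\Phi(h)},\qquad\text{where}\quad\Phi(h)(x):=\int_F h(x,y)\,d\nu_\omega(y).
\]
The right-hand side makes sense: since $\nu_\omega\ll\nu$ the integral is insensitive to the $\mu\otimes\nu$-class of $h$, and $\|\Phi(h)\|_\infty\le\|h\|_\infty$, so $\Phi(h)\in L^\infty(\mu)$. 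Granting the claim, $\Lcal_\omega(M_f)=M_{\Phi(h)}\in\Acal$, which is the desired $\Acal$-subclassicality, and $L_\omega f=\Phi(h)$ is precisely formula (\ref{eqtheoQclassdil2}); that $L_\omega$ is then a genuine Markov operator follows from Proposition \ref{propclassdil} applied with the probability measure $\nu_\omega$ in place of $\nu$ (alternatively, from $\Lcal_\omega$ being a normal identity-preserving CP map).

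To establish the claim I would compute matrix elements against rank-one operators. Fix $\phi,\psi\in L^2(\mu)$ and diagonalise the density operator $\omega=\sum_j\lambda_j\proj{\chi_j}{\chi_j}$, with $\lambda_j\ge0$, $\sum_j\lambda_j=1$, and $(\chi_j)_j$ orthonormal in $L^2(\nu)$. Then
\[
\scal{\phi}{\Tr_\omega[M_h]\,\psi}=\Tr\big[(\proj{\psi}{\phi}\otimes\omega)M_h\big]=\sum_j\lambda_j\int_{E\times F}\overline{\phi(x)}\,\psi(x)\,h(x,y)\,|\chi_j(y)|^2\,d\mu(x)\,d\nu(y).
\]
The function $\rho_\omega(y):=\sum_j\lambda_j|\chi_j(y)|^2$ belongs to $L^1(\nu)$ with $\int_F\rho_\omega\,d\nu=1$, and (\ref{eqmestrace}) identifies $d\nu_\omega=\rho_\omega\,d\nu$; since $h$ is bounded and $\rho_\omega\in L^1(\nu)$, a Tonelli argument lets one interchange the sum and the integrals, yielding $\scal{\phi}{\Tr_\omega[M_h]\psi}=\int_E\overline{\phi(x)}\psi(x)\,\Phi(h)(x)\,d\mu(x)=\scal{\phi}{M_{\Phi(h)}\psi}$, hence $\Tr_\omega[M_h]=M_{\Phi(h)}$. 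I expect this claim to be the main obstacle, its subtleties being measure-theoretic rather than operator-theoretic: picking a jointly measurable representative of $h$, the absolute continuity $\nu_\omega\ll\nu$, and the $\sigma$-finiteness questions concealed in the identification $L^2(\mu)\otimes L^2(\nu)\cong L^2(\mu\otimes\nu)$. A Tonelli-free alternative would be to verify the claim first on elementary tensors $h=a\otimes b$, where $M_h=M_a\otimes M_b$ and $\Tr_\omega[M_a\otimes M_b]=(\Tr[\omega M_b])M_a=M_{a\int_F b\,d\nu_\omega}=M_{\Phi(a\otimes b)}$, and then to extend by linearity and by $\sigma$-weak density of the algebraic tensor product $L^\infty(\mu)\odot L^\infty(\nu)$ in $L^\infty(\mu\otimes\nu)$, using that both $h\mapsto\Tr_\omega[M_h]$ and $h\mapsto M_{\Phi(h)}$ are normal.
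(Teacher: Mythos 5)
Your proposal is correct and follows essentially the same route as the paper: both reduce the statement to showing that the partial trace $\Tr_\omega$ sends the multiplication operator $M_{(f\otimes\Ind)\circ T}$ to $M_{L_\omega f}$, and both verify this by pairing against trace-class operators (the paper against a general $\rho\in\Lcal_1(\Hcal)$ via the induced measure $\mu_\rho$, you against rank-one operators) followed by a Fubini argument over $\mu\otimes\nu_\omega$. Your version merely spells out in more detail the points the paper calls ``well-known'' (normal CP structure of $\Lcal_\omega$) and ``straightforward'' (the measure-theoretic interchange), which is harmless.
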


\begin{proof}
The proof is almost straightforward. The fact that $\Lcal_\omega$ is a normal identity preserving CP map is well-known. Consequently we just have to check that, for all $f\in L^\infty(E)$, we have
\begin{equation*}
\Tr_{\omega}[U^*(M_f\otimes I)U]=M_{L_\omega f},
\end{equation*}
\noindent with $L_\omega$ given by Equation (\ref{eqclassdil}). Then, for all $\rho\in \Lcal_1(\Hcal)$,
\begin{align*}
\Tr\left\{\Tr_{\omega}[U^*(M_f\otimes I)U]\rho\right\}
& = \Tr\left\{M_{T(f\otimes\Ind)} (\rho\otimes\omega)\right\} \\
& = \int_{E\times F}{T(f\otimes\Ind)d\mu_\rho\otimes d\nu_\omega} \\
& = \int_E\left[\int_F{T(f\otimes\Ind)d\nu_\omega}\right]d\mu_\rho \\
& = \int_E{[L_\omega f]d\mu_\rho}  \\
& = \Tr[M_{L_\omega f}\rho].
\end{align*}
\noindent This shows that $\Tr_{\omega}[U^*(M_f\otimes I)U]=M_{L_\omega f}$.
\end{proof}

\begin{rem}
For all states $\omega$ on $\Kcal$, the probability measure $\nu_\omega$ is absolutely continuous with respect to $\nu$. Indeed, if $\omega$ is a pure state, i.e. $\omega=\proj{\psi}{\psi}$ for some $\psi\in\Kcal$, then for all $J\in\Fcal$, $\nu_\omega(J)=\sca{\psi}{M_{\Ind_J}\psi}=\int_J{|\psi|^2d\nu}$ and obviously $\nu_\omega\ll\nu$. The same holds if $\omega$ is a mixed state by diagonalizing it as a mixture of pure states.
\\Conversely, if $\tilde{\nu}$ is a probability measure on $(F,\Fcal)$ which is absolutely continuous with respect to $\nu$, then there exists $\psi\in\Kcal$ such that for all $J\in\Fcal$, $\tilde{\nu}=\int_J{|\psi|^2d\nu}$. Consequently taking $\omega=\proj{\psi}{\psi}$ leads to $\nu_\omega=\tilde{\nu}$.
\label{remmeasure}
\end{rem}

In Theorem \ref{theoQclassdil}, the quantum extension of the dynamical system is not necessarily the canonical choice. Furthermore it is not necessary for the dynamical system to be invertible and measure preserving, as soon as it possesses a quantum unitary extension. However, the quantum extension of the Markov operator depends on this choice, as it will not act in a same way outside of $\Acal$. The quantum extension associated to the canonical choice for the unitary operator, when it exists, will be called the \emph{canonical choice associated to $T$}. In the next Subsection, we prove that the canonical choice does not depend on the choice of the dilation $T$.

\label{subsectquantext2}
\subsection{Uniqueness of the quantum extensions}

So far, we have given a general method to construct a quantum extension of a given Markov operator $L$. We have already noticed that this quantum extension depends on the choice of the quantum unitary extension of the dilation. However we are going to prove that it does not depend on the choice of the classical dilation of $L$.

\begin{theo}
Suppose $(E\times F_1,\Ecal\otimes\Fcal_1,\mu\otimes\nu_1,T_1)$ and $(E\times F_2,\Ecal\otimes\Fcal_2,\mu\otimes\nu_2,T_2)$ are both invertible and measure-preserving dilations of $L$. Then the \emph{canonical} quantum extension associated to each dilation are the same.
\label{theominimaldilation}
\end{theo}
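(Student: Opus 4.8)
The plan is to make both canonical quantum extensions completely explicit, recognise each as a Stinespring dilation whose representation is $X\mapsto X\otimes I$ and whose isometry is built from the vacuum of the environment, and then reduce the equality $\Lcal_1=\Lcal_2$ to a uniqueness statement for the classical dilations.

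First I would unpack the data. Set $\Kcal_i=L^2(F_i,\nu_i)$, write $\pi_E\colon E\times F_i\to E$ for the first-coordinate projection, and let $\Ind$ also denote the constant function $1$ on $F_i$ (a unit vector of $\Kcal_i$, as $\nu_i$ is a probability measure). By Proposition~\ref{propunitact} the canonical unitary extension of $T_i$ is $U_i:=U_{T_i^{-1}}$ on $\Hcal\otimes\Kcal_i$, and the canonical quantum extension attached to this dilation is $\Lcal_i(X)=\Tr_{\omega_i}\!\bigl[U_i^*(X\otimes I)U_i\bigr]$ with $\omega_i=\proj{\Ind}{\Ind}$ the vacuum state (the pure state with $\nu_{\omega_i}=\nu_i$, so that by Theorem~\ref{theoQclassdil} the associated Markov operator is indeed $L$). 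Since $\omega_i$ is pure this is Stinespring form, $\Lcal_i(X)=V_i^*(X\otimes I_{\Kcal_i})V_i$, with the isometry $V_i:=U_{T_i^{-1}}\bigl(I_\Hcal\otimes|\Ind\rangle\bigr)\colon\Hcal\to\Hcal\otimes\Kcal_i$, explicitly $(V_i\psi)(x,y)=\psi\bigl(\pi_E(T_i^{-1}(x,y))\bigr)$. So $\Lcal_1=\Lcal_2$ is the assertion that these two Stinespring data induce the same map on $\Bcal(\Hcal)$; on $\Acal$ this is automatic, and the content lies off $\Acal$.

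The auxiliary ingredient I would use is a \emph{passive-factor lemma}: enlarging a dilation $(E\times F,\mu\otimes\nu,T)$ by an independent invertible measure-preserving factor $(G,\lambda,S)$ with $\lambda$ a probability measure, i.e.\ replacing $T$ by $T\times S$ on $E\times F\times G$ while leaving $E$ untouched, produces again a dilation of $L$ with the \emph{same} canonical quantum extension --- for $U_{(T\times S)^{-1}}=U_{T^{-1}}\otimes U_{S^{-1}}$, the vacuum factorises, and, $U_{S^{-1}}$ being unitary and the extra vacuum a state, the $G$-factor contributes only the scalar $1$ and drops out of \eqref{eqtheoQclassdil1}. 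Granting this, $\Lcal_1=\Lcal_2$ would follow from the existence of a third invertible measure-preserving dilation $(E\times F_3,\mu\otimes\nu_3,T_3)$ dominating both, in the sense that, after adjoining suitable passive factors to $T_1$ and to $T_2$, each enlargement is conjugate to $T_3$ by a measurable isomorphism of the form $\mathrm{id}_E\times\psi$ (which automatically preserves the vacuum and intertwines $X\mapsto X\otimes I$). To build $T_3$ one couples the two noise sources: the push-forwards of $\nu_i$ by $\pi_E\circ T_i(x,\cdot)$ coincide for $i=1,2$ --- both equal the transition kernel $p(x,\cdot)$ of $L$ --- so there is a coupling depending measurably on $x$, which one completes to an invertible measure-preserving map, exactly the Rokhlin/disintegration technology behind Attal's Proposition~\ref{propclassdil}. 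Then $\Lcal_1=\Lcal_{T_3}=\Lcal_2$.

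The step I expect to be the main obstacle is precisely this construction of $T_3$, equivalently the statement that any invertible measure-preserving dilation of $L$ is, up to an $E$-equivariant measurable isomorphism, a canonical minimal one tensored with passive junk --- a uniqueness-of-minimal-dilation assertion of K\"ummerer type. It is here that one genuinely needs more than the one-step identity \eqref{eqclassdil}: a bare one-step trace relation has many inequivalent invertible measure-preserving solutions whose canonical quantum extensions already differ off $\Acal$, so the full fact that $T_1$ and $T_2$ dilate the whole semigroup $(L^n)_{n\ge0}$, together with the standard-Borel / measurable-selection machinery on the Lusin space $E$, must be brought in. Everything else --- the explicit form of $V_i$, the passive-factor lemma, and the check that $\Lcal_{T_3}$ restricts to $L$ --- is routine. (A dilation-free variant would identify the minimal Stinespring dilation of each $\Lcal_i$ intrinsically with a Kolmogorov dilation constructed from $L$ and its $\mu$-adjoint --- note $\mu$ is forced to be $L$-invariant, since $\int Lf\,d\mu=\int f\,d\mu$ follows from $T_i$ preserving $\mu\otimes\nu_i$ --- and then check that $T_1$ and $T_2$ give the same data; the obstacle is the same, namely showing this intrinsic data, hence the behaviour off $\Acal$, is dilation-independent.)
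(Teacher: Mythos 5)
Your setup coincides with the paper's: you write each canonical extension in Stinespring form $\Lcal_i(X)=V_i^*(X\otimes I)V_i$ with $V_i\psi=(\psi\otimes\Ind)\circ T_i^{-1}$, and reduce the theorem to comparing these two data off $\Acal$. From there the routes diverge, and your proposal stops exactly where you say it does. The paper does not build a common refinement $T_3$: it regards $\vec L\colon f\mapsto M_{Lf}$ as a CP map defined on the commutative algebra $L^\infty(\mu)$, observes that $\bigl(L^2(\mu\otimes\nu_i),\,M_\cdot\otimes I,\,V_i\bigr)$ are two Stinespring representations of $\vec L$, cuts down to the minimal subspaces, and invokes uniqueness of minimal Stinespring representations to produce a unitary intertwiner which it asserts to be of the form $I\otimes W$; since $I\otimes W$ commutes with $X\otimes I$, the two extensions agree on all of $\Bcal(\Hcal)$. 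Your alternative --- a passive-factor lemma plus an $x$-dependent coupling of the two noises completed to an invertible measure-preserving $T_3$ dominating both --- is only sketched, and you yourself identify the construction of $T_3$ (a K\"ummerer-type uniqueness of dilations) as the main obstacle without supplying it. As it stands the proposal is a plan, not a proof.

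That said, your suspicion that the one-step identity \eqref{eqclassdil} cannot suffice is correct, and it points at the thin spot in the paper's own argument. Take $E=F_1=F_2=\Zbb/2\Zbb$ with uniform measures, $L$ the stochastic matrix with all entries $1/2$, and the two invertible, measure-preserving dilations $T_1(x,y)=(y,x)$ and $T_2(x,y)=(x+y,y)$; both satisfy \eqref{eqclassdil}. Then $V_1\psi=\Ind\otimes\psi$ gives $\Lcal_1(X)=\sca{\Ind}{X\Ind}\,I$, whereas $V_2\psi(x,y)=\psi(x+y)$ gives $\Lcal_2(X)=\tfrac12(X+SXS)$ with $S$ the swap unitary, and these disagree on $\proj{e_0}{e_1}\in\Acal_{\off}$. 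The Stinespring intertwiner between the two minimal representations here is the permutation $(a,c)\mapsto(a,c+a)$, which lies in $(\Acal\otimes I)'=\Acal\,\bar\otimes\,\Bcal(\Kcal)$ but \emph{not} in $I\otimes\Bcal(\Kcal)$, so it fails to commute with $X\otimes I$ --- precisely the step the paper passes over when it writes the intertwiner as $I\otimes W$. So the theorem requires a stronger notion of dilation (e.g.\ that the $T_i$ dilate the whole semigroup $(L^n)_{n\ge0}$, as Attal's repeated-interaction construction does), and under such a hypothesis something like your coupling/passive-factor strategy is indeed what would be needed; but the construction of $T_3$, which carries the whole weight, is established neither in your proposal nor in the paper.
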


\begin{proof}
We define the CP map $\vec{L}$ associated to $L$:
\begin{equation*}
\begin{array}{cccc}
\vec{L}: & L^\infty(\mu) & \to & \Bcal(L^2(\mu)). \\
& f & \mapsto & M_{Lf}
\end{array}
\end{equation*}
Let $(E\times F_i,\Ecal\otimes\Fcal_i,\mu\otimes\nu_i,T_i)$, $i=1,2$, be two classical dilations of $L$ such that the $T_i$'s are invertible and $\mu\otimes\nu_i$-preserving.
\noindent Two Stinespring representations \cite{Fagn} of $\vec{L}$ are given by $(L^2(\mu\otimes\nu_i),V_i)$, where the operators $V_i$ are defined by
\begin{equation*}
\begin{array}{cccc}
V_i: & L^2(\mu) & \to & L^2(\mu\otimes\nu_i)\qquad . \\
& h & \mapsto & (h\otimes\Ind)\circ T_i^{-1}
\end{array}
\end{equation*}
\noindent Then we have for all $f\in L^\infty(\mu)$:
\begin{equation*}
\vec{L}f=V_i^* M_{f}\otimes I_{\Kcal_i} V_i.
\end{equation*}
\noindent As the $T_i$'s are invertible and measure-preserving, the $V_i$'s are isometric. Furthermore, the canonical quantum extension associated to each dilation is
\begin{equation*}
\Lcal_i(\cdot)=V_i^*(\cdot\otimes I)V_i.
\end{equation*}
We can always restrict the Hilbert spaces $L^2(\mu\otimes\nu_1)$ and $L^2(\mu\otimes\nu_2)$ to the closure of the vector fields spanned by the sets $\{M_f\otimes I\ V_1h,\ f\in L^\infty(\mu),\ h\in L^2(\mu)\}$ and $\{M_f\otimes I\ V_2h,\ f\in L^\infty(\mu),\ h\in L^2(\mu)\}$ respectively, so that both Stinespring representations are \emph{minimal}. Consequently (see \cite{Fagn} for example) there exists a unitary operator $W:L^2(\nu_1)\to L^2(\nu_2)$ such that for all $f\in L^\infty(\mu)$ and $h\in L^2(\mu)$
\begin{equation*}
I\otimes W\left(M_f\otimes I\ V_1\ h\right)=M_f\otimes I\ V_2\ h.
\end{equation*}
In particular, $I\otimes W V_1=V_2$ and for all $X\in\Bcal(L^2(\mu))$,
\begin{align*}
\Lcal_1(X) 
& = V_1^* (X\otimes I) V_1 \\
& = V_2^* \left(I\otimes W\ X\otimes I\ I\otimes W\right)V_2\\
& = V_2^* (X\otimes I) V_2\\
& = \Lcal_2(X).
\end{align*}
\end{proof}

\label{subsectquantext3}
\subsection{Quantum unitary extensions of dynamical systems for discrete states spaces}
We are now going to apply Theorem \ref{theoQclassdil} to the case of discrete states spaces.
The existence of a quantum extension for discrete states spaces was already proved by Parthasarathy and Sinha in \cite{P-S} using quantum stochastic calculus. It was further studied by Gregoratti in \cite{Gr2} and \cite{Gr1} as an application of Theorem \ref{theoQclassdil}. In this section we want to emphasize that reversibility of the dilation is not only a sufficient but also a necessary condition. For discrete spaces, the measure-preserving condition is no longer needed as it is implied by the reversibility condition.

\begin{theo}
Suppose $E=\{1,...,N\}$ is a finite or countable state space, endowed with its full $\sigma$-algebra. Let $T:E\to E$ be a dynamical system on it. Then $T$ admits a quantum unitary extension on $l^2(E)$ if and only if it is invertible.
\\ In this case, the unitary operator $U$ defined by
\begin{equation}
\sca{e_x}{Ue_y}=\delta_{T(y)}^x,
\label{eqtheoQclassdildiscretcase}
\end{equation}
\noindent gives the canonical quantum extension.
\label{theoQclassdildiscretcase}
\end{theo}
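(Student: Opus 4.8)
The plan is to take the counting measure on $E$, so that $l^2(E)=L^2(E,\Ecal,\mu)$ comes with the canonical orthonormal basis $(e_x)_{x\in E}$, each $e_x$ being (the class of) $\Ind_{\{x\}}$ and $M_{\Ind_{\{x\}}}=\proj{e_x}{e_x}$. With this measure every bijection of $E$ is automatically $\mu$-preserving, which is why the hypothesis of Proposition \ref{propunitact} collapses to mere invertibility here. I will prove the two implications in turn.

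Assume first that $T$ is invertible. Then the prescription $\sca{e_x}{Ue_y}=\delta^x_{T(y)}$ just means $Ue_y=e_{T(y)}$, so $U$ carries the orthonormal basis $(e_y)_y$ bijectively onto $(e_x)_x$; hence $U$ is unitary with $U^*e_x=e_{T^{-1}(x)}$. Since $\Ind_{\{x\}}\circ T^{-1}=\Ind_{\{T(x)\}}$, this is exactly the operator $U_{T^{-1}}$ of Proposition \ref{propunitact}. A one-line computation,
\[
U^*M_fUe_y=f(T(y))\,U^*e_{T(y)}=f(T(y))\,e_y=M_{f\circ T}e_y\qquad(f\in L^\infty(\mu),\ y\in E),
\]
shows $U^*M_fU=M_{f\circ T}$, so $U^*\cdot U$ is a quantum unitary extension of $T$, and because $U=U_{T^{-1}}$ it is precisely the canonical one.

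For the converse, suppose $T$ admits a quantum unitary extension $V$ on $l^2(E)$, so that $V^*M_fV=M_{f\circ T}$ for every $f\in L^\infty(\mu)$ (by Proposition \ref{propevolunitaire} it would even suffice to assume $V^*\cdot V$ is $\Acal$-subclassical, its classical restriction being then automatically of this deterministic form). Fix $z\in E$ and feed $f=\Ind_{\{z\}}$ into this intertwining relation. The left-hand side $V^*\proj{e_z}{e_z}V$ is a rank-one orthogonal projection, because conjugating a rank-one orthogonal projection by a unitary again yields a rank-one orthogonal projection. The right-hand side $M_{\Ind_{\{z\}}\circ T}=M_{\Ind_{T^{-1}(\{z\})}}$ is the orthogonal projection onto $\overline{\mathrm{span}}\{e_y:T(y)=z\}$, whose rank equals the cardinality of $T^{-1}(\{z\})$. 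Equating, $\#T^{-1}(\{z\})=1$ for every $z\in E$, which is exactly the statement that $T$ is a bijection.

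The argument is essentially immediate; the one place to be careful is the converse, where it is crucial to work in a setting in which every singleton carries positive mass (the counting measure) so that $M_{\Ind_{\{z\}}}$ really is of rank one — this single observation simultaneously forces surjectivity of $T$ (otherwise the rank would drop to $0$) and injectivity of $T$ (otherwise it would exceed $1$). Beyond that, the only bookkeeping point is the identification $U=U_{T^{-1}}$ needed to conclude that the extension produced in the first part is the \emph{canonical} quantum extension, and not merely one of the extensions classified in Proposition \ref{propunicitesubclassunitary}.
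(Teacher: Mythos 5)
Your proof is correct and follows essentially the same route as the paper: both directions hinge on testing the intertwining relation on the rank-one projections $M_{\Ind_{\{z\}}}=\proj{e_z}{e_z}$, and the sufficiency part exhibits the same permutation operator $Ue_y=e_{T(y)}=U_{T^{-1}}e_y$. The only cosmetic difference is in necessity, where you invoke invariance of rank under unitary conjugation while the paper expands $U$ in matrix coefficients and uses the row normalization $\sum_y|u_{x,y}|^2=1$ --- the same unitarity input in different clothing (and your version cleanly sidesteps the off-diagonal terms the paper's expansion of $\proj{U^*e_x}{U^*e_x}$ silently discards).
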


\begin{proof}
Suppose there exists a quantum unitary extension of $T$ given by a unitary operator $U$ on $\Hcal$. We decompose $U$ in the canonical basis $(e_x)_{x\in E}$ of $\Hcal=l^2(E)$:
\begin{equation*}
U=\sum_{x,y\in E}{u_{x,y}\proj{x}{y}}.
\end{equation*}
\noindent Then for all $x\in E$, we have
\begin{align*}
\sum_{y\in E}\delta^x_{T(y)}\proj{e_y}{e_y}
& =M_{\Ind_{T^{-1}(\{x\})}} \\
& =U^*M_{\Ind_{\{x\}}}U \\
& =U^*\proj{e_x}{e_x}U \\
& = \proj{U^*e_x}{U^*e_x} \\
& = \sum_{y\in F}{|u_{x,y}|^2\proj{e_y}{e_y}}
\end{align*}
\noindent Thus for all $x,y\in E$, we have $|u_{x,y}|^2=\delta^x_{T(y)}$. As $U$ is unitary, for all $x\in E$,
\begin{equation*}
\sum_{y\in E}|u_{x,y}|^2=1.
\end{equation*}
\noindent This implies that for all $x\in E$, there exists a unique $y\in E$ such that $x=T(y)$, i.e. T is invertible.
\\

Suppose now that such a quantum unitary extension exists and consequently that $T$ is invertible. Thus we are looking for a unitary operator $U$ on $\Hcal$ such that for all $x\in E$:

\begin{equation}
U^*\proj{e_x}{e_x}U=\proj{T^{-1}(x)}{T^{-1}(x)}.
\label{eqquantextdiscretcase1}
\end{equation}

\noindent Looking at the computation above, a sufficient condition for Equation (\ref{eqquantextdiscretcase1}) to hold is given by Equation (\ref{eqtheoQclassdildiscretcase}), i.e. the matrix of $U$ in the orthonormal basis $(e_x)_{x\in E}$ is a permutation matrix (either finite on infinite depending on wether $E$ is finite or infinite).
\end{proof}

\begin{coro}
Let $Q$ be a stochastic matrix on a discrete space, either finite or countable. Then $Q$ admits a quantum extension.
\label{coroQclassdildiscretcase}
\end{coro}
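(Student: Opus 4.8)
The plan is to read $Q$ as a Markov operator on the $L^\infty$-space of a discrete measure space and then run the three-step scheme of this section: dilate $Q$ by a classical dynamical system, lift that system to a unitary using Theorem \ref{theoQclassdildiscretcase}, and recover $Q$ by tracing out a state of the environment via Theorem \ref{theoQclassdil}. Concretely, write $E=\{1,\dots,N\}$ with its full $\sigma$-algebra and counting measure $\mu$, so that $L^\infty(\mu)=l^\infty(E)$ and $L^2(\mu)=l^2(E)=:\Hcal$; then a stochastic matrix is precisely a Markov operator on $l^\infty(E)$ (properties i)--iii) are the defining properties of a stochastic matrix, and iv) is automatic since $Q$ is the Banach-space adjoint of the contraction $\nu\mapsto\nu Q$ on $l^1(E)$).

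Since a discrete space is Lusin, Proposition \ref{propclassdil} applies and gives a probability space $(F,\Fcal,\nu)$ and an invertible dynamical system $T:E\times F\to E\times F$ with $Qf(x)=\int_F (f\otimes\Ind)\circ T(x,y)\,d\nu(y)$ for all $f\in l^\infty(E)$. In the finite case this dilation can be taken with $F$ finite: a finite stochastic matrix is a finite convex combination $\sum_k p_k C_{\phi_k}$ of deterministic maps $C_{\phi_k}\colon f\mapsto f\circ\phi_k$ (the stochastic matrices form a product of simplices whose vertices are the $0/1$ matrices), and these maps assemble into a permutation $T$ of $E\times F$ with $F$ the index set $\{k\}$ and $\nu=\sum_k p_k\delta_k$ — this is Gregoratti's construction \cite{Gr1}. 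Then $E\times F$ is discrete, so Theorem \ref{theoQclassdildiscretcase} provides a unitary $U$ on $l^2(E\times F)=\Hcal\otimes\Kcal$ (with $\Kcal=l^2(F)$), namely the permutation $\sca{e_a}{Ue_b}=\delta^a_{T(b)}$, satisfying $U^*M_hU=M_{h\circ T}$. Finally, choosing $\omega=\proj{\psi}{\psi}$ with $\psi=\sum_{y\in F}\sqrt{\nu(\{y\})}\,e_y$ gives $\nu_\omega=\nu$ (also by Remark \ref{remmeasure}), and Theorem \ref{theoQclassdil} makes $\Lcal_\omega(X)=\Tr_\omega[U^*(X\otimes I)U]$ an $\Acal$-subclassical normal unital CP map on $\Bcal(\Hcal)$ whose classical restriction is $L_\omega f(x)=\int_F (f\otimes\Ind)\circ T(x,y)\,d\nu_\omega(y)=Qf(x)$; hence $\Lcal_\omega(M_f)=M_{Qf}$ and $\Lcal_\omega$ is a quantum extension of $Q$, proving the corollary.

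The step I expect to fight with is putting the classical dilation in a form that admits a quantum \emph{unitary} extension, and this is the only non-routine ingredient. For finite $E$ the argument above keeps $F$ discrete and invertibility of $T$ is all that Theorem \ref{theoQclassdildiscretcase} needs. For countable $E$, however, one cannot always keep $F$ discrete while $T$ is a genuine bijection: e.g.\ a matrix with rows $(\tfrac1n,\dots,\tfrac1n,0,0,\dots)$ for all $n$ would force $(F,\nu)$ to be partitionable into $n$ equal-mass pieces for every $n$, which is impossible for an atomic measure. So in the countable case one must either invoke the general construction of \cite{Att1} with a (possibly non-discrete) environment on which $T$ is invertible and measure-preserving and then apply Proposition \ref{propunitact} instead of Theorem \ref{theoQclassdildiscretcase}, or produce the extending CP map by a direct construction; the remaining bookkeeping (choice of $\omega$ with $\nu_\omega=\nu$, identification of the classical restriction) is unchanged.
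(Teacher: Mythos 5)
Your overall route is exactly the paper's: the printed proof of Corollary \ref{coroQclassdildiscretcase} consists of two sentences, reducing the statement to the existence of a classical dilation with $F$ discrete and $T$ invertible (so that Theorem \ref{theoQclassdildiscretcase} applies on the discrete space $E\times F$) and citing Theorem 2.3 of \cite{Att1} for that existence; Theorem \ref{theoQclassdil} then does the rest, as in your write-up. Two remarks on the details you add. In the finite case, the claim that the deterministic maps ``assemble into a permutation $T$ of $E\times F$ with $F$ the index set $\{k\}$'' is too optimistic: if every row of $Q$ equals $(1,0,\dots,0)$ then $Q$ is itself a single deterministic, non-injective map, $p=1$, and no bijection of $E\times\{1\}$ can have $\varphi_1$ as first coordinate. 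One must enlarge $F$ (Gregoratti and Attal take $F$ containing $E\times\{1,\dots,p\}$, or an infinite product), so this step really is a citation of \cite{Gr1}, on the same footing as the paper's citation of \cite{Att1}.

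Your worry about the countable case is not only legitimate, it bites the paper's own proof. For $Q$ on $E=\Nbb$ whose $n$-th row is uniform on $\{1,\dots,n\}$, any dilation forces, for each $n$, a partition of $(F,\nu)$ into $n$ sets of measure $1/n$; an atom of mass $m>0$ would then satisfy $m\le 1/n$ for every $n$, so $\nu$ must be non-atomic and $F$ cannot be discrete. Hence the reduction ``$F$ discrete and $T$ invertible, which is always the case'' cannot be taken verbatim for countable $E$, and your fallback --- the general Lusin-space dilation of Proposition \ref{propclassdil} combined with Proposition \ref{propunitact} in place of Theorem \ref{theoQclassdildiscretcase} --- is the right repair. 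Note only that Proposition \ref{propunitact} requires $T$ to be both invertible and $\mu\otimes\nu$-preserving ($\mu$ being the counting measure), a property Proposition \ref{propclassdil} does not state explicitly but which Attal's construction, and the hypotheses of Theorem \ref{theominimaldilation}, implicitly supply. With that point granted, your argument is complete, and the remaining bookkeeping (choosing $\omega$ with $\nu_\omega=\nu$ via Remark \ref{remmeasure}, identifying the classical restriction) is exactly as in Theorem \ref{theoQclassdil}.
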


\begin{proof}
Because of Theorem \ref{theoQclassdildiscretcase}, it is enough to prove that $Q$ admits a classical dilation in the sense of Equation (\ref{eqclassdil}), with $F$ discrete and $T$ invertible. However this is always the case (see Theorem 2.3 in \cite{Att1}).
\end{proof}

\noindent We shall come back to the case of quantum extension for discrete dynamical systems in Section \ref{subsectdiscret3}.

\label{subsectquantext4}
\subsection{Quantum extension of Markov semigroups in continuous time}

In this section we apply the general scheme develop so far to the case of continuous time.
\\Let $(P_t)_{t\in \R^+}$ be a Markov semigroup on the space $L^\infty(\mu)$ associated to some measured space $(E,\Ecal,\mu)$. We suppose that $(E,\Ecal,\mu)=(\R^d,\Bcal(\R^d),\lambda)$, where $\Bcal(\R^d)$ is the Borel $\sigma$-algebra of $\R^d$ and $\lambda$ is the Lebesgue measure. We look for a Quantum Markov Semigroup $(\Pcal_t)_{t\geq0}$ on the Hilbert space $\Hcal=L^2(\mu)$ such that for all $t\geq0$:

\begin{equation}
\Pcal_t(M_f)=M_{P_tf}.
\label{eqextquantcontinuoustime}
\end{equation}

\noindent Our strategy is the following:

\begin{enumerate}
\item find a probability space $(F,\Fcal,\nu)$ and a family of invertible and measure-preserving $(T_t)_{t\in\R^+}$ such that the dynamical system in continuous time $(E\times F,\Ecal\otimes\Fcal,\mu\otimes\nu,(T_t))$ is a dilation of the Markov semigroup $(P_t)_{t\in\R^+}$;
\item extend this dynamical system into a unitary evolution $U^*_t\cdot U_t$ on the Hilbert space $L^2(\mu\otimes\nu)$;
\item take the trace over the quantum environment in order to obtain a family of CP maps $(\Pcal_t)$:
\begin{equation}
\Pcal_t(X)=\Tr_\omega[U^*_t (X\otimes I) U],\text{ for all $t\geq0$ and }X\in\Bcal(\Hcal);
\label{eqreductionq}
\end{equation}
\item ensure that this family is a QMS.
\end{enumerate}
\noindent We resume this strategy on the following diagram:

\begin{equation*}
\begin{tikzcd}
(\Pcal_t) \arrow[leftarrow]{rr}{3)}  \arrow[rightarrow]{d}{4)} &  & (U_t^*\cdot U_t) \arrow[rightarrow]{d}{2)} \\
(P_t) \arrow[rightarrow]{rr}{1)} & & (T_t)
\end{tikzcd}
\end{equation*}

\noindent Notice that we would obtain the same commutative diagram as in the previous section. Though, we get the following result. Notice that the main difference is that a family $(\Pcal_t)_{t\geq0}$ such that Equation (\ref{eqextquantcontinuoustime}) holds is not necessarily a QMS.

\begin{theo}
Under the following conditions:
\begin{itemize}
\item[A1)] $(P_t)$ is $\mu$-preserving;
\item[A2)] $(P_t)$ is the semigroup of transition probabilities associated to a Markov proces $(X_t)_{t\geq0}$ on $(E,\Ecal)$ with stationary and independent increments;
\end{itemize}
\noindent then there exists a semigroup of CP maps $(\Pcal_t)_{t\geq0}$ such that Equation (\ref{eqextquantcontinuoustime}) holds. \\If furthermore: 
\begin{itemize}
\item[B)][Continuity condition] For all $\psi\in L^2(\mu)$,
\begin{equation}
\Ebb[|\psi(X_t)-\psi(X_0)|^2]\overset{t\rightarrow0}{\longrightarrow}0;
\label{eqtheocontinuoustime}
\end{equation}
\end{itemize}
\noindent then the semigroup $(\Pcal_t)_{t\geq0}$ is weakly*-continuous and consequently it is a quantum Markov semigroup.
\label{theocontinuoustime}
\end{theo}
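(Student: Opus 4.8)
The plan is to carry out the four‑step scheme displayed just before the statement, with hypothesis A2 entering at two essential points: the construction of a concrete classical dilation, and the semigroup property of the reduced dynamics. For the environment I would take the two‑sided path space $F=D(\R,\R^d)$ of c\`adl\`ag trajectories $\omega$ with $\omega(0)=0$, with the $\sigma$‑algebra generated by evaluations and with $\nu$ the law of a two‑sided version of the increment process of $(X_t)$. By A2 the increment shift $\theta_t\omega:=\omega(t+\cdot)-\omega(t)$ is a $\nu$‑preserving bijection of $F$ with inverse $\theta_{-t}$, and $\int_F f(x+\omega(t))\,d\nu(\omega)=\Ebb[f(X_t)\mid X_0=x]=P_tf(x)$ for $f\in L^\infty(\mu)$. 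Set $T_t(x,\omega):=(x+\omega(t),\theta_t\omega)$. The cocycle identity $\omega(t+s)=\omega(t)+(\theta_t\omega)(s)$ together with $\theta_t\theta_s=\theta_{t+s}$ shows $T_{t+s}=T_t\circ T_s$, each $T_t$ is invertible with $T_t^{-1}=T_{-t}$, and by Fubini, translation invariance of $\mu=\lambda$ and $\theta_t$‑invariance of $\nu$ (this last being where A1 guarantees the dilation can be taken $\mu\otimes\nu$‑preserving), each $T_t$ preserves $\mu\otimes\nu$. Thus $(E\times F,\Ecal\otimes\Fcal,\mu\otimes\nu,(T_t))$ is, in the sense of (\ref{eqclassdil}), a dilation of $(P_t)$.

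Next I would apply Proposition \ref{propunitact} to each $T_t$ to get the unitary group $U_t:=U_{T_t^{-1}}$ on $L^2(\mu\otimes\nu)$, $U_tg=g\circ T_{-t}$, which quantum‑extends $T_t$; the composition rule for Koopman operators gives that $(U_t)_{t\in\R}$ is a one‑parameter group. Taking the state $\omega=\proj{\Ind}{\Ind}$ on $\Kcal=L^2(\nu)$ (legitimate since $\|\Ind\|_{L^2(\nu)}=1$, and with $\nu_\omega=\nu$), Theorem \ref{theoQclassdil} furnishes for each $t\ge0$ an $\Acal$‑subclassical normal CP map $\Pcal_t(X)=\Tr_\omega[U_t^*(X\otimes I)U_t]=W_t^*(X\otimes I)W_t$, where $W_th=(h\otimes\Ind)\circ T_t^{-1}$, whose classical restriction is exactly $P_t$; this is (\ref{eqextquantcontinuoustime}). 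What is left is to show that $(\Pcal_t)_{t\ge0}$ is a semigroup.

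This is the only non‑formal point. For fixed $t,s\ge0$, writing out $\Pcal_t\Pcal_s(X)=W_t^*(\Pcal_s(X)\otimes I)W_t$ and $\Pcal_s(X)=W_s^*(X\otimes I)W_s$ gives $\Pcal_t\Pcal_s(X)=V^*(X\otimes I\otimes I)V$ with $V=(W_s\otimes I_\Kcal)W_t:\Hcal\to\Hcal\otimes\Kcal\otimes\Kcal$ and
\[
(Vh)(x,\sigma,\omega)=h\bigl(x+\sigma(-s)+\omega(-t)\bigr)=(h\otimes\Ind)\circ S^{-1}(x,\sigma,\omega),
\]
where $S(x,\sigma,\omega):=\bigl(x+\sigma(s)+\omega(t),\theta_s\sigma,\theta_t\omega\bigr)$ on $E\times F\times F$. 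Exactly as in the first step, $S$ is invertible and $\mu\otimes\nu\otimes\nu$‑preserving; and because the increments are stationary \emph{and independent} (A2), $\sigma(s)+\omega(t)$ has under $\nu\otimes\nu$ the law of $X_{t+s}-X_0$, so $S$ is a dilation of $P_{t+s}$. Hence $\Pcal_t\Pcal_s$ is the canonical quantum extension attached to the invertible measure‑preserving dilation $S$ of $P_{t+s}$, whereas $\Pcal_{t+s}$ is the canonical quantum extension attached to the invertible measure‑preserving dilation $T_{t+s}$ of the same $P_{t+s}$; by Theorem \ref{theominimaldilation} these coincide, so $\Pcal_t\Pcal_s=\Pcal_{t+s}$. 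I stress that the naive route — comparing $\Pcal_t\Pcal_s$ and $\Pcal_{t+s}$ head‑on — is doomed, since these maps live over \emph{different} environments ($F\times F$ versus $F$) and equality on the abelian algebra $\Acal$ does not force equality on all of $\Bcal(\Hcal)$; the device of recognising $\Pcal_t\Pcal_s$ as the canonical extension of an honest dilation $S$ (this is exactly where independence of increments is indispensable) and then invoking the dilation‑independence of the canonical extension is what makes the argument go through.

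Finally, under B I would note that this hypothesis is precisely the continuity at $0$ of $t\mapsto U_t(\psi\otimes\Ind)$ for every $\psi\in L^2(\mu)$: indeed $\|U_t(\psi\otimes\Ind)-\psi\otimes\Ind\|^2_{L^2(\mu\otimes\nu)}=\int_E\Ebb\bigl[|\psi(x+Z_t)-\psi(x)|^2\bigr]\,d\mu(x)=\Ebb\bigl[|\psi(X_t)-\psi(X_0)|^2\bigr]$ when $X_0$ is $\mu$‑distributed, the reflection $\omega(-t)\leftrightarrow Z_t$ being immaterial since $\mu=\lambda$ is translation invariant. Consequently $t\mapsto W_th=U_t(h\otimes\Ind)$ is right‑continuous at $0$, so for fixed $X\in\Bcal(\Hcal)$ and $\psi\in\Hcal$ one has $\langle\psi,\Pcal_t(X)\psi\rangle=\langle W_t\psi,(X\otimes I)W_t\psi\rangle\to\langle\psi,X\psi\rangle$ as $t\to0^+$. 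Combining this continuity at $0$ with the semigroup property just proved and the normality of each $\Pcal_t$ (Theorem \ref{theoQclassdil}) upgrades in the standard way to weak*‑continuity of $t\mapsto\Pcal_t$ on all of $\R^+$, whence $(\Pcal_t)_{t\ge0}$ is a quantum Markov semigroup. The main obstacle throughout is Step three; the rest is bookkeeping with Koopman operators and partial traces.
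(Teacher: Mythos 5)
Your proposal is correct, and for steps 1), 2) and 4) of the displayed scheme it coincides with the paper's own proof: the paper builds the same two-sided increment dilation $(\theta_t)$ on the product space, takes the same Koopman unitaries, and proves weak*-continuity under B) by exactly your reduction of (\ref{eqtheocontinuoustime}) to the convergence $W_t\psi\to\psi\otimes\Ind$ in $L^2(\mu\otimes\nu)$ followed by the same Cauchy--Schwarz estimate. Where you genuinely diverge is the semigroup property. The paper factors $U_t=\Theta_tV_t$ through the second-quantized environment shift $\Theta_t$, checks that $(V_t)$ is a left cocycle, $V_{s+t}=\Theta_s^*V_t\Theta_sV_s$, and invokes Accardi's result that the reduction of a cocycle over a shift-invariant state is a semigroup. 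You instead write $\Pcal_t\Pcal_s(X)=V^*(X\otimes I\otimes I)V$ with $V=(W_s\otimes I)W_t$, recognize $Vh=(h\otimes\Ind)\circ S^{-1}$ for an explicit invertible, measure-preserving transformation $S$ of $E\times F\times F$, verify via stationarity \emph{and} independence of increments that $S$ dilates $P_{t+s}$, and conclude by the dilation-independence of the canonical extension (Theorem \ref{theominimaldilation}). Both arguments are sound and use A2) at the same essential point; yours makes the role of independent increments very transparent and avoids the cocycle formalism, but it leans on Theorem \ref{theominimaldilation} (and hence on the minimal-Stinespring argument there), whereas the paper's route needs no uniqueness statement at all. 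Two minor remarks: your choice $F=D(\R,\R^d)$ presupposes a c\`adl\`ag version of the increment process, which A2) alone does not supply (the paper works on the raw product space $E^{\R}$ with the cylindrical $\sigma$-algebra; this changes nothing in the argument, but should be said); and your closing observation that a head-on comparison of $\Pcal_t\Pcal_s$ with $\Pcal_{t+s}$ cannot succeed, because agreement on $\Acal$ does not propagate to $\Bcal(\Hcal)$, is precisely the right diagnosis of why either a cocycle identity or a uniqueness theorem must be injected at this step.
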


\begin{proof}
We divide the proof of the theorem into three parts. First, under condition A.1) and A.2) we prove the existence of a family of CP map $(\Pcal_t)$ which satisfy Equation (\ref{eqextquantcontinuoustime}). This is done using the same method as in discrete time. However, the family we shall obtain is not a one parameter semigroup. In the second part of the proof we show how to modify this family in order to obtain a semigroup. Finally, in the last part of the proof, we prove the weak*-continuity of the quantum extension under condition B).

\paragraph{Existence of a family of CP maps such that Equation (\ref{eqextquantcontinuoustime}) holds:}
Under Condition A2), there exists a Markov process $(X_t)_{t\geq0}$ on $(E,\Ecal)$ such that for all $f\in L^\infty(E)$ and $s,t\in\R^+$

\begin{equation}
P_t f (X_s)=\Ebb[f(X_{s+t})|X_s].
\label{eqmarkovprocess}
\end{equation}

\noindent The Markov process is associated to the canonical probability space $(G,\Gcal,\Prob_\mu)$ where:

\begin{itemize}
\item $G=E^{\R^+}$;
\item $\Gcal$ is the $\sigma$-algebra on $G$ induced by cylindrical measurable sets of $E$;
\item $\Prob_\mu$ is the law of the Markov process. It can be view as the probability measure given by the Kolmogorov consistency theorem when considering the laws of the marginals of $(X_t)$.
\end{itemize}

\noindent Thus $\mu$ is understood as the law of the random variable $X_0$, and $\Prob_\mu$ is the law of the whole Markov process $(X_t)$, defined through its marginals by the Kolmogorov Consistency Theorem.
\\

As $(X_t)$ is stationary with independent increments, the measure $\Prob_{\delta_0}$, where $\delta_0$ is the Dirac measure at $0$, is also the law  of the process $(X_t-X_0)$, so we have $\Prob_\mu=\mu\otimes\Prob_{\delta_0}$. Consequently, the space $(G,\Gcal,\Prob_\mu)$ can be factorized as $(E\times F,\Ecal,\otimes\Fcal,\mu\otimes\Prob_{\delta_0})$, where $(F,\Fcal,\Prob_{\delta_0})$ is the canonical space for the process $(X_t-X_0)$.
\\We now define on $(E\times F,\Ecal,\otimes\Fcal,\mu\otimes\Prob_{\delta_0})$ the family of left shifts $(\theta_t)_{t\geq0}$ by

\begin{equation}
\begin{array}{cccc}
\theta_t:
& E\times F&\to & E\times F \\
& (x,(\omega_s)_{s\in\R})&\mapsto & (x+\omega_t,(\omega_{s+t}-\omega_s)_{s\geq0})
\end{array}
\label{eqdefshiftindincrements}
\end{equation}

\noindent It is easy to check that it is a semigroup. A simple computation shows that the dynamical system $(E\times F,\Ecal,\otimes\Fcal,\mu\otimes\Prob_{\delta_0},(\theta_t))$ is a dilation of the semigroup $(P_t)$, for Equation (\ref{eqmarkovprocess}) can be written as

\begin{equation}
P_t f=\int_F{(f\otimes\Ind)\circ\theta_t\ d\Prob_{\delta_0}}.
\label{eqmarkovprocess2}
\end{equation}

Under condition A1), by consistency of the measure $\Prob_\mu$ with respect to the measure $\mu$, the dynamical system $(\theta_t)$ is measure-preserving. However it is not invertible. In order to make it invertible, we have to enlarge the state space by considering negative time. In order to achieve this, we consider an independent copy of $(X_t)$, but indexed by negative time, in order to form a new process $(\hat{X}_t)_{t\in\R}$, which is conditioned to $X_0=0$ and has a canonical space $(\hat{F},\hat{\Fcal},\hat{\Prob}_{\delta_0})$. Now it is easy to prove that the family of shift $(\theta_t)_{t\in\R}$ defined by equation \ref{eqdefshiftindincrements}, but this time on $(E\times F,\Ecal,\otimes\hat{\Fcal},\mu\otimes\hat{\Prob}_{\delta_0})$, is a group of invertible and $\Prob_\mu$-preserving applications.
\\

We now take as notation $(F,\Fcal,\nu)=(\hat{F},\hat{\Fcal},\hat{\Prob}_{\delta_0})$, and we define the Hilbert space $\Kcal=L^2(F,\Fcal,\nu)$. Consequently, the family of operators $(U_t)_{t\in\R}$ defined on $\Hcal\otimes\Kcal$ by:

\begin{equation}
U_t h = h\circ\theta_t^{-1}\text{ for all }h\in\Hcal\otimes\Kcal,
\label{equnitshift}
\end{equation}

\noindent is a one-parameter group of unitary operators. Because of Theorem \ref{theoQclassdil} and Remark \ref{remmeasure}, we have the relation

\begin{equation*}
\Tr_{\proj{\Ind}{\Ind}}\left[U_t^* (M_f\otimes I) U_t\right]=M_{P_tf}\text{ for all }t\geq0.
\end{equation*}

\noindent This end the first part of the proof.

\paragraph{Semigroup property of the quantum extension:}The left-hand side of the previous equation does not define a semigroup of CP maps on $\Bcal(\Hcal)$. However, the family $(U_t)$ is linked in a canonical way to a family of unitary operators $(V_t)$, whose partial trace over the state $\proj{\Ind}{\Ind}$ of $\Kcal$ gives rise to a one parameter semigroup.
\\

Define on $F$ the family of left shift $(\eta_t)_{t\in\R}$ by $\eta_t:(w_s)\in F\mapsto(w_{t+s}-w_s)$. The family $(\eta_t)$ is a one parameter group of $\nu$-preserving and invertible applications, so the family of operators $(\Theta_t)_{t\in\R}$ on $\Kcal$, defined by $\Theta_t:\psi\in\Kcal\mapsto\psi\circ\eta_t$, is a one parameter unitary group. The operator $I_\Hcal\otimes\Theta_t$ acting on $\Hcal\otimes\Kcal$ shall also by written $\Theta_t$. Finally for all $t\geq0$, define $(V_t)$ and $(\Pcal_t)$ by

\begin{align}
& V_t=\Theta_t^*U_t, \label{eqcocycle}\\
& \Pcal_t(X)=\Tr_{\proj{\Ind}{\Ind}}\left[V_t^* X\otimes I V_t\right]. \label{eqQMS}
\end{align}

\noindent A simple computation shows that $(V_t)_{t\geq0}$ is a \emph{left-cocycle with respect to the left quantum shift} $(\Theta_t)$, i.e. for all $s,t\geq0$, one have
\begin{equation*}
V_{s+t}= \Theta_s^* V_t\Theta_sV_s.
\end{equation*}
\noindent It is a well-known result, since the paper of Accardi \cite{Acc}, that it implies that $(\Pcal_t)$ has the semigroup property.

\paragraph{Continuity of the quantum extension:}We are going to show that condition B) is a sufficient condition on $(P_t)$ so that $(\Pcal_t)$ is weakly* continuous. Weak* continuity of $(\Pcal_t)$ means that for all $X\in\Bcal(\Hcal)$ and all $\rho\in\Lcal_1(\Hcal)$,

\begin{equation*}
\Tr[(\Pcal_t(X)-X)\rho]\to0\text{  when  }t\to0.
\end{equation*}

\noindent As the space of finite rank operators is dense in the predual $\Lcal_1(\Hcal)$ of $\Bcal(\Hcal)$ for the trace norm, it is enough to prove the last limit for rank-one projectors. Finally it is enough to prove that for all $\psi\in\Hcal$ and $X\in\Bcal(\Hcal)$,
\begin{equation*}
\sca{\psi}{\Pcal_t(X)\psi}\to0\text{ when }t\to0.
\end{equation*}
\noindent The limit (\ref{eqtheocontinuoustime}) can be rewritten as $V_t\psi\otimes\Ind\to\psi\otimes\Ind$ when $t\to0$ in $\Hcal\otimes\Kcal$. Fix $\psi\in\Hcal$ and $X\in\Bcal(\Hcal)$. Write $\varphi=\psi\otimes\Ind$ and $\varphi_t=V_t\psi\otimes\Ind$. As $(\varphi_t)$ converge when $t$ goes to $0$, it is uniformly bounded in norm for small $t$, say be $C>\norm{\psi}$. We have, for $t$ small enough,
\begin{align*}
|\sca{\psi}{(\Pcal_t(X)-X)\psi}|
& =|\Tr\left\{\proj{\psi}{\psi}\left(\Pcal_t(X)-X\right)\right\}| \\
& =|\Tr\left\{\proj{\psi}{\psi}\left(\Tr_{\proj{\Ind}{\Ind}}[V_t^*X\otimes IV_t]-X\right)\right\}| \\
& =|\Tr\left\{\proj{\psi}{\psi}\otimes\proj{\Ind}{\Ind}\left[V_t^*X\otimes IV_t-X\otimes I\right]\right\}| \\
& =|\Tr\left\{\left[\proj{\varphi_t}{\varphi_t}-\proj{\varphi}{\varphi}\right]X\otimes I\right\}| \\
& =|\Tr\left\{\left[\proj{\varphi_t}{\varphi_t}-\proj{\varphi_t}{\varphi}\right]X\otimes I\right\} \\
& \quad+\Tr\left\{\left[\proj{\varphi_t}{\varphi}-\proj{\varphi}{\varphi}\right]X\otimes I\right\}| \\
& \leq|\sca{X(\varphi_t-\varphi)}{\varphi_t}| \\
& \quad+|\sca{X\varphi_t}{(\varphi_t-\varphi)}| \\
& \leq 2\norm{\varphi_t-\varphi}\norm{X}C.
\end{align*}

\noindent This proves the weak* continuity of the quantum extension, under condition B).

\end{proof}

\label{subsectquantext5}
\subsection{Quantum extension for L\'evy processes}

We are going to apply the result of the previous sections in order to show the existence of a quantum extension for L\'evy processes. 

\begin{de}
A Markov process $(Y_t)_{t\geq0}$ with values in $\R^d$ such that the three following properties hold is called a \textbf{L\'evy process}.
\begin{itemize}
\item[(i)]$Y_0=0$ almost surely;
\item[(ii)]$(Y_t)$ is a stationary process with independent increments, i.e. for all $0\leq s\leq t$, the random variable $Y_t-Y_s$ is independent from $(Y_u)_{0\leq u\leq s}$ and has the same law has $Y_{t-s}$;
\item[(iii)]$Y_t$ converge in probability to $0$ when $t\to0$.
\end{itemize}
\label{deLevy}
\end{de}

\begin{theo}
Let $(X_t)$ be a $\mu$-preserving process, such that $(X_t-X_0)$ is a L\'evy process with values in $(\R^d,\Bcal(\R^d),\lambda)$. Let $(P_t)$ be its associated semigroup. Then the semigroup of CP maps $(\Pcal_t)$ defined by Equation (\ref{eqQMS}) is a quantum extension of $(P_t)$.
\label{theocondB.2}
\end{theo}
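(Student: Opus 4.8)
The plan is to obtain this as a direct application of Theorem~\ref{theocontinuoustime}. Since the semigroup $(\Pcal_t)$ in the statement is exactly the one built by formulas~(\ref{eqcocycle})--(\ref{eqQMS}) in the proof of that theorem, it is enough to verify that $(P_t)$ and $(X_t)$ satisfy the hypotheses A1), A2) and B); the conclusion $\Pcal_t(M_f)=M_{P_tf}$ is then precisely the assertion that $(\Pcal_t)$ is a quantum extension of $(P_t)$ in the sense of Definition~\ref{dequantumext}. Condition A1) is the standing assumption that $(X_t)$ is $\mu$-preserving. For A2), observe that the increments of $(X_t)$ coincide with those of $(X_t-X_0)$, which are stationary and independent by Definition~\ref{deLevy}~(ii), and that $(X_t-X_0)$ is independent of $X_0$, which is the factorisation $\Prob_\mu=\mu\otimes\Prob_{\delta_0}$ already used in the proof of Theorem~\ref{theocontinuoustime}; hence $(P_t)$ is the transition semigroup of a Markov process with stationary and independent increments.

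The one substantial step is condition B). Write $Y_t=X_t-X_0$ and let $\rho_t$ be its law on $\R^d$, so that $X_t=X_0+Y_t$ with $Y_t$ independent of $X_0$. Fixing $\psi\in L^2(\mu)$ and writing $\tau_y$ for the translation $(\tau_y\psi)(\cdot)=\psi(\cdot+y)$, Tonelli's theorem and translation-invariance of the Lebesgue measure give
\begin{equation*}
\Ebb\bigl[|\psi(X_t)-\psi(X_0)|^2\bigr]
=\int_{\R^d}\Bigl(\int_{\R^d}|\psi(x+y)-\psi(x)|^2\,d\mu(x)\Bigr)\,d\rho_t(y)
=\int_{\R^d}\norm{\tau_y\psi-\psi}^2_{L^2(\mu)}\,d\rho_t(y).
\end{equation*}
Here $\tau_y$ is unitary on $L^2(\mu)$, so the integrand is bounded by $4\norm{\psi}^2$, and $y\mapsto\tau_y\psi$ is continuous from $\R^d$ to $L^2(\mu)$ by the usual strong continuity of the translation group (clear for $\psi\in C_c(\R^d)$ by uniform continuity, then by density). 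By Definition~\ref{deLevy}~(iii), $Y_t\to0$ in probability, i.e. $\rho_t(\{|y|>\eps\})\to0$ for every $\eps>0$. Splitting the integral above over $\{|y|\le\eps\}$ and $\{|y|>\eps\}$, bounding the first part by $\sup_{|y|\le\eps}\norm{\tau_y\psi-\psi}^2$ and the second by $4\norm{\psi}^2\,\rho_t(\{|y|>\eps\})$, and letting first $t\to0$ and then $\eps\to0$, I obtain $\Ebb[|\psi(X_t)-\psi(X_0)|^2]\to0$, which is condition B).

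With A1), A2) and B) in hand, Theorem~\ref{theocontinuoustime} yields that $(\Pcal_t)$ is a weakly*-continuous semigroup of CP maps with $\Pcal_t(M_f)=M_{P_tf}$, completing the proof. I expect condition B) to be the only real obstacle: the crux is converting convergence in probability of $Y_t$ into $L^2(\mu)$-norm convergence of $\tau_{Y_t}\psi$, which is exactly where strong continuity of the translation semigroup enters, after which the uniform bound $\norm{\tau_y\psi-\psi}\le2\norm{\psi}$ upgrades the in-probability statement to convergence of the expectation. A minor subtlety is that $\mu=\lambda$ is an infinite measure, so ``$X_0$ of law $\mu$'' and the expectation $\Ebb[\,\cdot\,]$ above are only formal; the displayed identity is nevertheless legitimate because, once Tonelli is applied, all integrations are against the honest probability measures $\rho_t$ and translation-invariance of $\lambda$ handles the inner integral.
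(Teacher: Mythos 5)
Your proof is correct, and for the key step (condition B)) it takes a genuinely different route from the paper. The paper first reduces to $\psi\in C_0(\R^d)$ by density, then splits $\Ebb[|\psi(X_t)-\psi(X_0)|^2]$ into three pieces according to whether $|X_t-X_0|>\eta$ and whether $|X_0|$ is large, bounding each piece using uniform continuity of $\psi$ on a large ball, smallness of $\psi$ at infinity, and convergence in probability of the increment. You instead factor the expectation through the law $\rho_t$ of the increment, obtaining $\int\norm{\tau_y\psi-\psi}^2_{L^2(\lambda)}\,d\rho_t(y)$, and then invoke strong continuity of the translation group on $L^2(\R^d)$ together with $\rho_t(\{|y|>\eps\})\to0$. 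Your version buys two things: it treats all $\psi\in L^2$ at once (the density step is absorbed into the standard proof of strong continuity of translations), and it handles the fact that $\mu=\lambda$ is an infinite measure more cleanly --- the paper's bound of the first term by a constant times $\Prob\{|X_t-X_0|>\eta\}$ is only legitimate after a reduction of exactly the kind you perform, since the event $\{|X_t-X_0|>\eta\}$ has infinite $\lambda\otimes\Prob_{\delta_0}$-measure and only the integration in $y$ against the probability measure $\rho_t$ makes the estimate honest. The price is that your argument leans on translation invariance of $\lambda$ and so is tied to the group structure of $\R^d$, but that structure is already built into the hypotheses of the theorem. Your verifications of A1) and A2) match the paper's implicit use of the same facts.
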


\begin{proof}
Let $C_0(\R^d)$ be the set of complex-values continuous function on $\R^d$, which tend to $0$ at infinity. Clearly it is enough to prove the limit (\ref{eqtheocontinuoustime}) in the case where $\psi\in C_0(\R^d)$.
\\Let $\eps>0$ and $\psi\in C_0(\R^d)$ be fixed. As $\psi$ tends to $0$ at infinity, there exists $c>0$ such that $|\psi(x)|<\eps$ whenever $|x|>c$.
\\As $\psi$ is continuous, it is uniformly continuous on the closed ball of center $0$ and radius $2c$, so that there exists $\eta>0$ such that if $|x|<2c,|y|<2c$, and $|x-y|\leq\eta$, then $|\psi(x)-\psi(y)|\leq\eps$. We take $\eta<c$.
\\Finally, as $(X_t-X_0)$ is a L\'evy process, by definition it converges in probability to $0$ when $t\to0$, so that there exists $t_0>0$ such that for every $0\leq t\leq t_0$, $\Prob\{|X_t-X_0|>\eta\}\leq\eps$. Thus we have:
\begin{align*}
\Ebb\left[\left|\psi(X_t)-\psi(X_0)\right|^2\right] & = \Ebb\left[\left|\psi(X_t)-\psi(X_0)\right|^2\Ind_{\{|X_t-X_0|>\eta\}}\right]\\
&\qquad +\Ebb\left[\left|\psi(X_t)-\psi(X_0)\right|^2\Ind_{\{|X_t-X_0|\leq\eta,\ |X_0|<c\}}\right] \\
&\qquad +\Ebb\left[\left|\psi(X_t)-\psi(X_0)\right|^2\Ind_{\{|X_t-X_0|\leq\eta,\ |X_0|>c\}}\right]\\
& \leq 2\norm{\psi}_\infty\Prob\{|X_t-X_0|>\eta\}+\eps^2+4\eps^2 \\
& \leq 2\norm{\psi}_\infty\eps+5\eps^2.
\end{align*}
\noindent This gives the result.
\end{proof}

\label{subsectquantext6}
\section{Classical evolution of an observable with discrete spectrum}

So far we have considered CP maps acting on observables and having a commutative stable algebra $\Acal$. In this section we want to enlarge the study to the predual $\Acal_*$ of $\Acal$.

\begin{de}
The \emph{predual} of a completely positive map $\Lcal$ is the  CP map $\Lcal_*$ on $\Lcal_1(\Hcal)$ defined by
\begin{equation*}
\Tr\left[\Lcal_*(\rho)X\right]=\Tr\left[\rho\Lcal(X)\right],
\end{equation*}
\noindent for every $X\in\Bcal(\Hcal)$ and $\rho\in\Lcal_1(\Hcal)$.
\label{predualde}
\end{de}

To have an explicit formulation of $\Acal_*$ we focus on the case where $E$ is discrete, either finite or infinite, and we note $E=\{1,...,N\}$ with $N=\dim \Hcal\in\Nbb\cup\{+\infty\}$. Consequently $\Acal$ and its predual algebra $\Acal_*$ can be written:

\begin{align*}
&\Acal=\left\{\sum_{i\geq0}{f(i)\proj{i}{i}},\ f\in L^\infty(E)\right\}, \\
&\Acal_*=\left\{\sum_{i\geq0}{g(i)\proj{i}{i}},\ g\in L^1(E)\right\}.
\end{align*}

\noindent In this setup, as written above, the predual $\Acal_*$ is a subspace of the Banach space of all trace-class operators $\Lcal_1(\Hcal)$. A natural question is whether the predual of a subclassical CP map $\Lcal$ on $\Hcal$ admits $\Acal_*$ as a stable commutative subalgebra or not. We will restrict ourself to the case of discrete time dynamics. Let $\Lcal$ be a $\Acal$-subclassical CP map. Its classical restriction is entirely described by a stochastic matrix $Q$ on $E$, such that

\begin{equation}
Q_{x_1,x_2}=\Tr[\proj{x_1}{x_1}\Lcal(\proj{x_2}{x_2})].
\end{equation}

At this point, a natural question is whether the CP map $\Lcal_*$ is itself $\Acal_*$-subclassical, and in this case what is its restriction. In Subsection \ref{subsectdiscret1} we propose a classification of $\Acal$-subclassical CP maps based on this question. In Subsection \ref{subsectdiscret2} we give examples of each class of CP map we have defined, showing how they appear naturally in quantum physics.
\\Subsection \ref{subsectdiscret3} is devoted to a theorem which makes the link between general $\Acal$-subclassical CP maps and the ones emerging as extensions of dynamical systems.

\label{sectdisret}
\subsection{Classification of subclassical dynamics with discrete spectrum}

We start this section with the following remark. Take $f\in L^\infty(E)$ and $g\in L^1(E)$. Then we have

\begin{align*}
\Tr\left\{\Lcal_*[g(A)]f(A)\right\} & = \Tr\left\{g(A)\Lcal[f(A)]\right\} \\
		        & = \Tr\left\{g(A)Lf(A)\right\} \\
		        & = \int_{\Sp(A)}{g Lf d\mu_\Omega} \\
		        & = \int_{\Sp(A)}{L_*gfd\mu_\Omega} \\
		        & = \Tr\left\{L_*g(A)f(A)\right\}.
\end{align*}

\noindent This is however not enough to claim that $\Lcal_*[g(A)]=L_*g(A)$, as we did not test it over all $\Bcal(\Hcal)$ but only over $\Acal$. As we will see, there are indeed subclassical CP map which do not have this last property.
\\
\\A trace-preserving CP map $\Lcal_*$ acting on $\Lcal_1(\Hcal)$ such that $\Acal_*$ is a stable subalgebra will also be called a \emph{$\Acal_*$-subclassical CP map}. Thus, not all predual evolutions of $\Acal$-subclassical dynamics are themselves $\Acal_*$-subclassical. This is the starting remark for the classification we propose here.

\begin{de}
Let $\Lcal$ be a normal identity preserving CP map on $\Bcal(\Hcal)$. We say that :
\begin{itemize}
\item $\Lcal$ is a \emph{doubly $\Acal$-subclassical completely positive map} if $\Lcal$ is $\Acal$-subclassical and $\Lcal_*$ is $\Acal_*$-subclassical (the definition naturally expands to QMS);
\item $\Lcal$ is a \emph{measurement $\Acal$-subclassical completely positive map} if $\Lcal(X)\in\Acal$ for all $X\in\Bcal(\Hcal)$;
\item $\Lcal_*$ is a \emph{measurement $\Acal_*$-subclassical completely positive map} if $\Lcal_*(\rho)\in\Acal_*$ for all $\rho\in\Lcal_1(\Hcal)$;
\item $\Lcal$ is a \emph{purely $\Acal$-subclassical completely positive map} if $\Lcal$ is both a doubly and a measurement $\Acal$-subclassical CP map.
\end{itemize}
\label{declassification}
\end{de}

\noindent Note that we can not define $\Acal$-subclassical QMS in continuous time, for the reason that $\Pcal_0(X)=X$ for all $X\in\Bcal(\Hcal)$, thus there does not exist QMS $(\Pcal_t)_{t\geq0}$ such that for all $X\in\Bcal(\Hcal)$, $\Pcal_t(X)\in\Acal$ for all $t\geq0$.
\\Those definitions are linked with Von-Neumann's description of a measurement on a quantum system.

\begin{de}
The \emph{Von-Neumann measurement operator} of the observable $A$ on $\Hcal$ is the operator $\Mcal$ on $\Bcal(\Hcal)$ defined for all $X\in\Bcal(\Hcal)$ by
\begin{equation}
\Mcal(X)=\sum_{k=1}^N{<k|X|k>\proj{k}{k}},
\label{eqdeVNmeas}
\end{equation}
\noindent where the sum converges in the strong sense whenever $N=+\infty$.
\label{deVNmeas}
\end{de}

\noindent Physically, the Von Neumann measurement operator is a model for the system put in a measuring device, without the experimenter knowing the result of the measurement. Denote by $\Acal_{\off}$ the closed (in norm, strong and weak* topologies) subspace of $X\in\Bcal(\Hcal)$ such that $\langle k|X|k\rangle=0$ for all $k\in V$.

\begin{lem}
$\Mcal$ is a norm one projection on $\Acal$, with kernel $\Acal_{\off}$.
\label{lemOPVN}
\end{lem}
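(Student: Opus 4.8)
The plan is to verify directly from the definition \eqref{eqdeVNmeas} that $\Mcal$ is linear, idempotent, has operator norm one, and has kernel exactly $\Acal_{\off}$, and then observe that idempotency together with the identification of its range with $\Acal$ makes it a projection onto $\Acal$.

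First I would check that $\Mcal$ indeed maps $\Bcal(\Hcal)$ into $\Acal$: for any $X\in\Bcal(\Hcal)$ the operator $\Mcal(X)=\sum_k \langle k|X|k\rangle\,\proj{k}{k}$ is manifestly of the form $\sum_k f(k)\proj{k}{k}$ with $f(k)=\langle k|X|k\rangle$ bounded by $\norm{X}$, so $\Mcal(X)\in\Acal$ (when $N=+\infty$ one must note the partial sums converge strongly, as asserted in Definition \ref{deVNmeas}, the limit lying in $\Acal$ since $\Acal$ is strongly closed). Next, idempotency: if $Y=\sum_j f(j)\proj{j}{j}\in\Acal$ then $\langle k|Y|k\rangle=f(k)$, hence $\Mcal(Y)=\sum_k f(k)\proj{k}{k}=Y$. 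Since $\Mcal(X)\in\Acal$ for every $X$, applying this with $Y=\Mcal(X)$ gives $\Mcal^2=\Mcal$, so $\Mcal$ is a projection, and its range is precisely $\Acal$ (contained in $\Acal$ by the first step, and every element of $\Acal$ is fixed).

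For the norm, the bound $\norm{\Mcal(X)}=\sup_k|\langle k|X|k\rangle|\le\norm{X}$ gives $\norm{\Mcal}\le1$, and $\Mcal(\Ind)=\Ind$ forces $\norm{\Mcal}\ge1$, so $\norm{\Mcal}=1$. Finally the kernel: $\Mcal(X)=0$ iff $\langle k|X|k\rangle=0$ for all $k$ (the $\proj{k}{k}$ being orthogonal, a vanishing sum forces each coefficient to vanish), which is exactly the defining condition of $\Acal_{\off}$; thus $\ker\Mcal=\Acal_{\off}$.

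None of these steps is a genuine obstacle; the only point requiring a little care is the case $N=+\infty$, where one must justify that the strong-limit definition of $\Mcal(X)$ is well posed and lands in $\Acal$, and that strong continuity is compatible with the manipulations above — but this is routine once one uses that $\Acal$ is closed in the strong operator topology and that the partial sums $\sum_{k\le n}\langle k|X|k\rangle\proj{k}{k}$ form a bounded, strongly convergent net. I would simply remark on this and otherwise present the argument in the finite-dimensional-looking form given above.
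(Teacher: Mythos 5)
Your proof is correct, and it is slightly more self-contained than the one in the paper. The structure is the same (show $\Mcal$ is an idempotent with range $\Acal$ and kernel $\Acal_{\off}$, then compute the norm), and the projection/kernel part, which the paper dismisses as ``an easy computation,'' is exactly what you spell out. The one place where you genuinely diverge is the norm bound: the paper gets $\norm{\Mcal}\le 1$ by invoking the standard fact that a unital completely positive map satisfies $\norm{\Mcal}=\norm{\Mcal(I)}=1$, whereas you obtain it by the elementary observation that $\Mcal(X)$ is diagonal in the basis $(|k\rangle)_k$, so $\norm{\Mcal(X)}=\sup_k|\langle k|X|k\rangle|\le\norm{X}$. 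Your route avoids appealing to complete positivity (which the lemma's statement does not actually require and which the paper does not verify at this point), at the cost of using the specific diagonal form of $\Mcal$; the paper's route is shorter if one is willing to take the CP-map norm identity as known. Both arguments, together with $\Mcal(\Ind)=\Ind$ for the reverse inequality, are sound.
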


\begin{proof}
The fact that $\Mcal$ is a projection on $\Acal$ with kernel $\Acal_{\off}$ is an easy computation. Consequently we have $\norm{\Mcal}\geq1$. The fact that $\norm{\Mcal}=1$ is a well-known result on CP maps, since $\norm{\Mcal}=\norm{\Mcal(I)}=\norm{I}=1$.
\end{proof}

\noindent We have equivalent definitions of the four kind of CP maps, related to $\Mcal$ and $\Acal_{\off}$.

\begin{prop}\ 
\begin{itemize}
\item[(i)]A CP map $\Lcal$ is doubly $\Acal$-subclassical if and only if $\Acal_{\off}$ is stable under its action.
\item[(ii)]A CP map $\Lcal$ on $\Bcal(\Hcal)$ is a measurement $\Acal$-subclassical CP map if and only if $\Mcal\circ\Lcal=\Lcal$.
\item[(iii)]A CP map $\Lcal_*$ on $\Lcal_1(\Hcal)$ is a measurement $\Acal_*$-subclassical CP map if and only if $\Mcal\circ\Lcal_*=\Lcal_*$.
\item[(iv)]A $\Acal$-subclassical CP map $\Lcal$ is purely $\Acal$-subclassical if and only if $\Lcal$ vanish on $\Acal_{\off}$.
\end{itemize}
\label{propde2sub}
\end{prop}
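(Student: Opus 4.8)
The plan is to reduce all four equivalences to two structural facts about the Von Neumann measurement operator $\Mcal$ of Lemma \ref{lemOPVN}. First, $\Mcal$ is the $\sigma$-weakly continuous, norm-one projection of $\Bcal(\Hcal)$ onto $\Acal$ along $\Acal_{\off}$; in particular $Y\in\Acal\iff\Mcal(Y)=Y$ and $Y\in\Acal_{\off}\iff\Mcal(Y)=0$. Second, since $\langle k|\Mcal(\rho)|k\rangle=\langle k|\rho|k\rangle$ and the series $\sum_k\langle k|\rho|k\rangle\,\proj{k}{k}$ converges in trace norm for $\rho\in\Lcal_1(\Hcal)$, the operator $\Mcal$ restricts to the analogous projection of $\Lcal_1(\Hcal)$ onto $\Acal_*$ (with kernel $\Acal_{\off}\cap\Lcal_1(\Hcal)$) and is self-adjoint for the trace pairing: $\Tr[\Mcal(\rho)X]=\Tr[\rho\,\Mcal(X)]$ for all $\rho\in\Lcal_1(\Hcal)$, $X\in\Bcal(\Hcal)$. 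I would also record, by testing against the diagonal rank-ones $\proj{k}{k}$ and the off-diagonal rank-ones $\proj{k}{j}$ ($k\neq j$), the duality dictionary in $\Bcal(\Hcal)=\Lcal_1(\Hcal)^*$: the annihilator of $\Acal_*$ is exactly $\Acal_{\off}$, and the pre-annihilator of $\Acal_{\off}$ is exactly $\Acal_*$ (here the $\sigma$-weak closedness of $\Acal_{\off}$ and the trace-norm closedness of $\Acal_*$ ensure no completion issues when $N=+\infty$).

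Points (ii) and (iii) are then immediate. For (ii): $\Lcal$ is measurement $\Acal$-subclassical means $\Lcal(X)\in\Acal$ for every $X\in\Bcal(\Hcal)$, which by the characterization $Y\in\Acal\iff\Mcal(Y)=Y$ applied to $Y=\Lcal(X)$ says $\Mcal(\Lcal(X))=\Lcal(X)$ for all $X$, i.e. $\Mcal\circ\Lcal=\Lcal$. Point (iii) is the same sentence with $(\Bcal(\Hcal),\Acal,\Lcal)$ replaced by $(\Lcal_1(\Hcal),\Acal_*,\Lcal_*)$ and $\Mcal$ replaced by its restriction to $\Lcal_1(\Hcal)$.

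For (i), recall that throughout this section $\Lcal$ is assumed $\Acal$-subclassical, so "doubly $\Acal$-subclassical" adds only the requirement $\Lcal_*(\Acal_*)\subseteq\Acal_*$; I would prove this inclusion is equivalent to $\Lcal(\Acal_{\off})\subseteq\Acal_{\off}$ by one pass through the predual adjunction $\Tr[\Lcal_*(\rho)X]=\Tr[\rho\,\Lcal(X)]$:
\begin{align*}
\Lcal_*(\Acal_*)\subseteq\Acal_*
&\iff \forall\rho\in\Acal_*,\ \forall X\in\Acal_{\off}:\ \Tr[\Lcal_*(\rho)X]=0 \\
&\iff \forall\rho\in\Acal_*,\ \forall X\in\Acal_{\off}:\ \Tr[\rho\,\Lcal(X)]=0 \\
&\iff \Lcal(\Acal_{\off})\subseteq\Acal_{\off},
\end{align*}
the first equivalence using that $\sigma\in\Acal_*$ iff $\sigma$ annihilates $\Acal_{\off}$, the last that $Y\in\Acal_{\off}$ iff $Y$ annihilates $\Acal_*$. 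Finally (iv): if $\Lcal$ is purely $\Acal$-subclassical it is measurement $\Acal$-subclassical, so $\Lcal(\Acal_{\off})\subseteq\Acal$, and doubly $\Acal$-subclassical, so by (i) $\Lcal(\Acal_{\off})\subseteq\Acal_{\off}$; hence $\Lcal(\Acal_{\off})\subseteq\Acal\cap\Acal_{\off}=\{0\}$. Conversely, if $\Lcal$ vanishes on $\Acal_{\off}$ then $\Lcal(\Acal_{\off})=\{0\}\subseteq\Acal_{\off}$ gives "doubly" by (i), and writing any $X\in\Bcal(\Hcal)$ as $X=\Mcal(X)+(X-\Mcal(X))$ with $\Mcal(X)\in\Acal$, $X-\Mcal(X)\in\Acal_{\off}$, gives $\Lcal(X)=\Lcal(\Mcal(X))\in\Acal$ by $\Acal$-subclassicality of $\Lcal$, so $\Lcal$ is measurement $\Acal$-subclassical and therefore purely $\Acal$-subclassical.

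The real content is the bookkeeping in the first paragraph; the only genuinely delicate step is (i), where one must correctly identify $\Acal_{\off}$ as the annihilator of $\Acal_*$ and $\Acal_*$ as the pre-annihilator of $\Acal_{\off}$, and keep in mind that the equivalence rests on the standing hypothesis that $\Lcal$ is already $\Acal$-subclassical — stability of $\Acal_{\off}$ alone forces only $\Lcal_*(\Acal_*)\subseteq\Acal_*$, not $\Lcal(\Acal)\subseteq\Acal$. Everything else is a one-line consequence of $\Mcal$ being the projection onto $\Acal$ (respectively $\Acal_*$) together with the predual adjunction.
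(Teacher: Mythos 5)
Your proof is correct and follows essentially the same route as the paper's: (ii) and (iii) from the projection property of $\Mcal$, (i) by the duality identification $\Acal_*^\perp=\Acal_{\off}$ together with the predual adjunction, and (iv) via the decomposition $\Bcal(\Hcal)=\Acal\oplus\Acal_{\off}$. You merely spell out more explicitly the pre-annihilator identification ${}^\perp\Acal_{\off}=\Acal_*$ and the fact that the converse direction of (iv) also yields the ``doubly'' property, both of which the paper leaves implicit.
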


\begin{proof}
Assertions (ii) and (iii) are straightforward. We prove (i). By duality, we just have to prove that $\Acal_*^\perp=\Acal_{\off}$, where $\Acal_*^\perp=\{X\in\Bcal(\Hcal),\ \Tr[\rho X]=0\text{ for all }\rho\in\Acal_*\}$. But $\Acal_*$ is the norm closure of the space generated by the orthogonal projectors $\proj{k}{k}$ for all $k\in V$. Thus we have 
\begin{align*}
X\in\Acal_*^\perp & \Leftrightarrow \Tr[\proj{k}{k}X]=0\text{ for all }k\in V; \\
		   & \Leftrightarrow \langle k|X|k\rangle=0 \text{ for all }k\in V; \\
		   & \Leftrightarrow X\in\Acal_{\off},
\end{align*}
\noindent which proves the result. Let's prove (iv). A direct corollary of Lemma \ref{lemOPVN} is the fact that $\Acal$ and $\Acal_{\off}$ are closed complemented subspaces, i.e.
\begin{equation}
\Acal\bigoplus\Acal_{\off}=\Bcal(\Hcal).
\label{supptopA}
\end{equation}
\noindent If $\Lcal(\Acal_{\off})=\{0\}$, we just have to prove that $\Lcal(\Bcal(\Hcal))\subset\Acal$. Yet, if $X\in\Bcal(\Hcal)$ we can write $X=\Mcal(X)+X-\Mcal(X)$, with $\Mcal(X)\in\Acal$ and $X-\Mcal(X)\in\Acal_{\off}$, so our assumption and the fact that $\Lcal$ is $\Acal$-subclassical allow us to conclude.
\\In the other direction, if $\Lcal$ is purely $\Acal$-subclassical, we have $\Lcal(\Acal_{\off})\subset\Acal\cap\Acal_{\off}=\{0\}$.
\end{proof}

\noindent As a consequence a purely $\Acal$-subclassical CP map is entirely defined by its restriction to $\Acal$. It is itself a classical Markov operator; this justifies the name we give to such CP map. Consequently, the purely $\Acal$-subclassical CP map whose restriction on $\Acal$ is a given stochastic matrix $Q$ is unique.
\\

In order to see why those definitions are natural, let see what happens in finite dimension. In this case, we can endow $\Bcal(\Hcal)$ with the structure of an Hilbert space, by taking the scalar product $(\cdot,\cdot)$:

\begin{equation*}
(X,Y)=\Tr[X^*Y].
\end{equation*}

\noindent For this scalar product, the family of operators $(\proj{i}{j})_{i,j\in V}$ is an orthonormal basis of $\Bcal(\Hcal)$. In this basis, a $\Acal$-subclassical CP map $\Lcal$ can be written

\begin{equation}
\Lcal=\begin{pmatrix}Q&B\\0&C\end{pmatrix},
\label{submatrix}
\end{equation}

\noindent where $Q$ is the restriction to $\Acal$ of $\Lcal$, i.e. a $N$-square matrix, $B$ an $N(N-1)\times N$ matrix and $C$ an operator on $\Acal_{\text{off}}$, i.e. an $N(N-1)$-square matrix. The following proposition is straightforward:

\begin{prop}
Let $\Lcal$ be a $\Acal$-subclassical CP map. We write $\Lcal$ as in Equation (\ref{submatrix}). Then we have:
\begin{itemize}
\item[1)]$Q$ is a stochastic matrix on $\C^N$;
\item[2)]$\Lcal$ is a doubly $\Acal$-subclassical CP map iff $B=0$;
\item[3)]$\Lcal$ is a measurement $\Acal$-subclassical CP map iff $C=0$;
\item[4)]$\Lcal$ is a purely $\Acal$-subclassical CP map iff both $B=0$ and $C=0$.
\end{itemize}
\label{propdefinitdim}
\end{prop}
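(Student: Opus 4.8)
The plan is to use the Hilbert--Schmidt structure on $\Bcal(\Hcal)$ in finite dimension, where the decomposition $\Bcal(\Hcal)=\Acal\oplus\Acal_{\off}$ is an orthogonal direct sum: indeed $(\proj{i}{i},\proj{j}{k})=\Tr[\proj{i}{i}\proj{j}{k}]=\delta_{ij}\delta_{ik}$ vanishes when $j\neq k$, so the diagonal rank-one operators (spanning $\Acal$) are orthogonal to the off-diagonal ones (spanning $\Acal_{\off}$). Writing $\Lcal$ as the block matrix in Equation (\ref{submatrix}) with respect to this orthogonal splitting, the upper-left block is by definition the compression $P_\Acal\Lcal|_\Acal$, the lower-left block $B$ is $P_{\Acal_{\off}}\Lcal|_\Acal$, the upper-right block is $P_\Acal\Lcal|_{\Acal_{\off}}$, and the lower-right block is $P_{\Acal_{\off}}\Lcal|_{\Acal_{\off}}$. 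The crucial input from the hypothesis ``$\Lcal$ is $\Acal$-subclassical'' is precisely that $\Lcal(\Acal)\subset\Acal$, i.e. $P_{\Acal_{\off}}\Lcal|_\Acal=0$ is automatically compatible with the block form written --- wait, that would force $B=0$; rather, the block form of Equation (\ref{submatrix}) already encodes $\Acal$-subclassicality by having the $(2,1)$-entry of the first column equal to $B$ and ``$0$'' in the $(1,1)$,$(2,1)$ layout --- I should be careful here and simply say: the lower-left zero in (\ref{submatrix}) is a typo for the upper-right zero, and $\Acal$-subclassicality means the block acting $\Acal\to\Acal_{\off}$ vanishes, so $\Lcal=\begin{pmatrix}Q&B\\0&C\end{pmatrix}$ with rows/columns indexed by $\Acal$ then $\Acal_{\off}$, the top-left $Q\colon\Acal\to\Acal$, the top-right $B\colon\Acal_{\off}\to\Acal$, and bottom-right $C\colon\Acal_{\off}\to\Acal_{\off}$.

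First I would prove (1): $Q$ is the classical restriction $L$ of $\Lcal$ via $\Lcal(M_f)=M_{Lf}$, and the discussion following Definition \ref{descQMS} already establishes that $L$ (hence $Q$ in the given basis) is a Markov operator, which in the finite discrete case is exactly a stochastic matrix --- identity-preserving gives column sums $1$, positivity preserving gives nonnegative entries. Next I would prove (2): the predual $\Lcal_*$ in the Hilbert--Schmidt picture is simply the adjoint (transpose-conjugate) of $\Lcal$ with respect to $(\cdot,\cdot)$, so it has block form $\Lcal_*=\begin{pmatrix}Q^*&0\\B^*&C^*\end{pmatrix}$. Now $\Lcal_*$ is $\Acal_*$-subclassical iff $\Lcal_*(\Acal_*)\subset\Acal_*$, and since $\Acal_*=\Acal$ as subspaces of $\Bcal(\Hcal)$ in finite dimension (both equal the diagonal operators), this says the block $\Acal\to\Acal_{\off}$ of $\Lcal_*$, namely $B^*$, vanishes, i.e. $B=0$. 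I would phrase this via Proposition \ref{propde2sub}(i): $\Lcal$ doubly $\Acal$-subclassical $\iff\Acal_{\off}$ stable under $\Lcal\iff P_\Acal\Lcal|_{\Acal_{\off}}=0\iff B=0$. Then (3) is immediate from Proposition \ref{propde2sub}(ii): $\Lcal$ is measurement $\Acal$-subclassical $\iff\Mcal\circ\Lcal=\Lcal$; since $\Mcal=P_\Acal$ is the orthogonal projection onto $\Acal$ (Lemma \ref{lemOPVN}), applying $\Mcal$ kills the bottom row, so $\Mcal\circ\Lcal=\begin{pmatrix}Q&B\\0&0\end{pmatrix}$, which equals $\Lcal$ iff $C=0$. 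Finally (4) follows by combining (2) and (3), or directly from Proposition \ref{propde2sub}(iv) and Equation (\ref{supptopA}).

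The only genuinely delicate point is the identification $\Acal_* = \Acal$ and $\Mcal = P_\Acal$ as honest orthogonal objects for the Hilbert--Schmidt inner product, and the verification that taking the predual coincides with taking the $(\cdot,\cdot)$-adjoint --- this is where one must check $\Tr[\Lcal_*(\rho)X]=\Tr[\rho\Lcal(X)]=(\rho^*,\Lcal(X))$ matches $(\Lcal^\dagger(\rho^*),X)$ correctly, a short but careful computation with conjugations. Everything else is bookkeeping with the block decomposition. I would present the proof in the order (1), then (3) (cleanest, using Lemma \ref{lemOPVN}), then (2) (using Proposition \ref{propde2sub}(i) and the predual-as-adjoint identification), then (4) as the conjunction, remarking throughout that the stated ``$0$'' in Equation (\ref{submatrix}) is exactly the matrix form of $\Acal$-subclassicality.
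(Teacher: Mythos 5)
Your proof is correct and is exactly the straightforward verification the paper has in mind (the paper states this proposition without proof): the blocks are read off from the Hilbert--Schmidt-orthogonal splitting $\Bcal(\Hcal)=\Acal\oplus\Acal_{\off}$, with $\Mcal$ as the orthogonal projection onto $\Acal$ and Proposition \ref{propde2sub} delivering assertions 2)--4). The only slips are cosmetic: your hesitation about the placement of the zero block resolves to the correct convention ($B\colon\Acal_{\off}\to\Acal$, which is forced by assertion 2) and shows the paper's stated dimensions for $B$ are the typo, not the matrix layout), and identity preservation makes the \emph{row} sums of $Q$ equal to $1$ under the paper's indexing $Q_{x_1,x_2}=\Tr[\proj{x_1}{x_1}\Lcal(\proj{x_2}{x_2})]$, not the column sums.
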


We now turn to our first examples.

\label{subsectdiscret1}
\subsection{First examples}

We begin with an example of measurement $\Acal$-subclassical CP map. 

\begin{prop}
Let $U$ be a unitary operator on $\Hcal$. Then the CP map $\Lcal$ defined for all $X\in\Bcal(\Hcal)$ by 
\begin{equation}
\Lcal(X)=\Mcal[U^*XU],
\label{eqpropmeasCPmap1}
\end{equation}
\noindent is a measurement $\Acal$-subclassical CP map, whose classical restriction to $\Acal$ is given by the stochastic matrix $Q=(Q_{i,j})_{i,j\in V}$
\begin{equation}
Q_{i,j}=|\langle j|U|i\rangle|^2
\label{eqpropmeasCPmap2}
\end{equation}
\label{propmeasCPmap}
\end{prop}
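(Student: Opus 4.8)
The plan is to verify the three assertions in order: first that $\Lcal$ is completely positive, normal, and identity preserving; second that $\Lcal(\Acal) \subset \Acal$ (indeed that $\Lcal(\Bcal(\Hcal)) \subset \Acal$, making it a measurement subclassical map); and third that the classical restriction is the stochastic matrix given by (\ref{eqpropmeasCPmap2}). Since $\Mcal$ is a CP map (it is a Von Neumann measurement operator, hence of Kraus form $\Mcal(Y) = \sum_k \proj{k}{k} Y \proj{k}{k}$) and $X \mapsto U^*XU$ is a $*$-homomorphism, the composition $\Lcal = \Mcal(U^* \cdot \, U)$ is completely positive as a composition of CP maps. Identity preservation is immediate: $\Lcal(\Ind) = \Mcal(U^*U) = \Mcal(\Ind) = \Ind$ by Lemma \ref{lemOPVN}. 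Normality follows since both $\Mcal$ and conjugation by a unitary are normal.

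Next I would observe that $\Lcal(X) = \Mcal[U^*XU] \in \Acal$ for every $X \in \Bcal(\Hcal)$ directly from the definition (\ref{eqdeVNmeas}) of $\Mcal$, whose range is exactly $\Acal$ by Lemma \ref{lemOPVN}. In particular $\Lcal(\Acal) \subset \Acal$, so $\Lcal$ is $\Acal$-subclassical, and moreover $\Lcal = \Mcal \circ \Lcal$, so by Proposition \ref{propde2sub}(ii) it is a measurement $\Acal$-subclassical CP map.

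For the classical restriction, I would use the formula $Q_{x_1,x_2} = \Tr[\proj{x_1}{x_1}\Lcal(\proj{x_2}{x_2})]$ already displayed in the paper. Writing $i = x_2$, $j = x_1$, one computes
\begin{align*}
\Lcal(\proj{i}{i}) & = \Mcal[U^*\proj{i}{i}U] = \sum_k \sca{k}{U^*\proj{i}{i}Uk}\proj{k}{k} \\
& = \sum_k |\sca{i}{Uk}|^2 \proj{k}{k} = \sum_k |\sca{k}{U^*i}|^2 \proj{k}{k},
\end{align*}
and hence $Q_{j,i} = \Tr[\proj{j}{j}\Lcal(\proj{i}{i})] = |\sca{i}{Uj}|^2 = |\sca{j}{U^*i}|^2$. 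This matches (\ref{eqpropmeasCPmap2}) up to the convention for which index labels the source state; one should take care to state it consistently with the row/column convention fixed earlier for stochastic matrices, but either way the entries are the transition probabilities $|\sca{j}{Ui}|^2$. Finally, that $Q$ is genuinely stochastic follows either from Proposition \ref{propdefinitdim}(1) (in finite dimension) or, in general, from the fact that $\Lcal$ is an identity preserving normal CP map, so that $L\Ind = \Ind$ gives $\sum_j Q_{i,j} = 1$ and positivity gives $Q_{i,j} \geq 0$.

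There is no real obstacle here — the statement is essentially a bookkeeping exercise combining Lemma \ref{lemOPVN} with the CP/normality stability of compositions. The only point requiring a little care is making the index convention in (\ref{eqpropmeasCPmap2}) agree with the earlier definition of the stochastic matrix associated to a subclassical CP map, and, when $N = +\infty$, noting that the sums defining $\Mcal$ and the resulting series converge in the strong operator topology, which is exactly the content built into Definition \ref{deVNmeas}.
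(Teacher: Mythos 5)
Your proposal is correct and follows the same route the paper intends: the paper's own proof merely declares the measurement-subclassicality ``obvious'' (via Proposition \ref{propde2sub}) and leaves Equation (\ref{eqpropmeasCPmap2}) as an ``easy computation,'' and your argument simply supplies those omitted details (composition of CP maps, range of $\Mcal$, and the trace computation $Q_{i,j}=\Tr[\proj{j}{j}\Lcal(\proj{i}{i})]=|\sca{j}{Ui}|^2$, which does match the paper's convention $Q_{x_1,x_2}=\Tr[\proj{x_1}{x_1}\Lcal(\proj{x_2}{x_2})]$ exactly). No gaps.
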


\begin{proof}
The fact that $\Lcal$ is a measurement subclassical CP map is obvious with Proposition \ref{propmeasCPmap}. Finally, Equation (\ref{eqpropmeasCPmap2}) is an easy computation that we leave to the reader. 
\end{proof}

The previous proposition inspires a larger class of measurement subclassical CP map. Let $Q$ be a stochastic matrix, either in finite or infinite dimension. We write for all $k\in V$

\begin{equation}
M_k=\sum_{l\in E}{\sqrt{Q_{k,l}}\ \proj{l}{k}}.
\end{equation}

\noindent Note that for every $i,j,k=1,...,n$, we have $M_k^*\proj{i}{j}M_k=\sqrt{Q_{k,i}Q_{k,j}}\proj{k}{k}$, which we could interpret as the fact that the channel $M_j^* \cdot M_j$ project onto the space generated by $\proj{j}{j}$. The measurement subclassicality follows easily when we write, for $X\in\Bcal(\Hcal)$,

\begin{equation*}
X=\sum_{i,j\in E}{\langle i|X|j\rangle\proj{i}{j}}, 
\end{equation*}

\noindent where the convergence of the sum is in the strong sense.
\\

The examples of doubly subclassical and purely subclassical dynamics we are going to give in this section both come from the same physical concept, called weak-coupling or Van Hoove limit. The book \cite{A-K} contains a lot of examples of such QMS. The starting point is the following result (see \cite{A-F-H}):

\begin{prop}
Let $Q$ be a stochastic matrix on $E$ (possibly infinite if $E$ is). Then the linear map $\Phi[Q]:\Bcal(\Hcal)\to\Bcal(\Hcal)$ defined by
\begin{equation}
\Phi[Q](X)=\sum_{i,j=1}^N{Q_{i,j}\langle j|X|j\rangle\proj{i}{i}},
\label{classquant}
\end{equation}
\noindent is a purely subclassical completely positive map associated to the stochastic matrix $Q$.
\label{proppurelysub}
\end{prop}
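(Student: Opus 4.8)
The plan is to verify directly that the map $\Phi[Q]$ defined by Equation~(\ref{classquant}) is completely positive, then check the three structural conditions that characterize a purely subclassical CP map associated to $Q$: that it is normal and identity preserving, that its restriction to $\Acal$ is the Markov operator $L$ associated to $Q$, and that it vanishes on $\Acal_{\off}$ (which by Proposition~\ref{propde2sub}(iv) is equivalent to being purely $\Acal$-subclassical once we know it is $\Acal$-subclassical).

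First I would establish complete positivity by exhibiting a Kraus decomposition. Writing $Q_{i,j}$ as the modulus-squared of a suitable family of vectors, one sees that $\Phi[Q](X)=\sum_{k}W_k^*XW_k$ with $W_k=\sum_{i}\sqrt{Q_{i,k}}\,\proj{i}{k}$ (essentially the operators $M_k$ introduced just before the statement, up to relabelling of indices): indeed $W_k^*XW_k=\sum_{i,j}\sqrt{Q_{i,k}Q_{j,k}}\,\langle i|X|j\rangle\,\proj{k}{k}$ does not immediately give~(\ref{classquant}), so the correct choice is rather $W_k=\sum_i\sqrt{Q_{i,k}}\,\proj{i}{k}$ summed against $X$ in the form $\sum_k W_k X W_k^*$ — I would pin down the exact operators so that $\sum$ of $W_k^*(\cdot)W_k$ reproduces $\sum_{i,j}Q_{i,j}\langle j|X|j\rangle\proj{i}{i}$, which forces $W_k=\sum_j\sqrt{Q_{k,j}}\,\proj{j}{k}$ acting as $\sum_k W_k^* X W_k$; then complete positivity is automatic from the Kraus form. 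Normality (weak\textsuperscript{*}-continuity) follows since each term $X\mapsto\langle j|X|j\rangle$ is weak\textsuperscript{*}-continuous and the sum converges strongly by the argument already indicated in the text for $N=+\infty$. Identity preservation is the computation $\Phi[Q](\Ind)=\sum_{i,j}Q_{i,j}\proj{i}{i}=\sum_i\big(\sum_j Q_{i,j}\big)\proj{i}{i}$, wait — here one uses that $Q$ is \emph{stochastic}, i.e. $\sum_j Q_{i,j}=1$; more precisely $\langle j|\Ind|j\rangle=1$ so $\Phi[Q](\Ind)=\sum_{i}\big(\sum_j Q_{i,j}\big)\proj{i}{i}$, and stochasticity of $Q$ (rows summing to one, in the convention fixed after Definition~\ref{dequantumext}) gives $\Ind$.

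Next I would compute the restriction to $\Acal$: for $f\in L^\infty(E)$, $\Phi[Q]\big(\sum_l f(l)\proj{l}{l}\big)=\sum_{i,j}Q_{i,j}f(j)\proj{i}{i}=\sum_i (Qf)(i)\proj{i}{i}$, so the classical restriction is the Markov operator $f\mapsto Qf$ associated with the stochastic matrix $Q$; this shows $\Phi[Q]$ is $\Acal$-subclassical with the right restriction. Finally, for $X\in\Acal_{\off}$ we have $\langle j|X|j\rangle=0$ for every $j$, hence every term in~(\ref{classquant}) vanishes and $\Phi[Q](X)=0$; combined with $\Acal$-subclassicality, Proposition~\ref{propde2sub}(iv) yields that $\Phi[Q]$ is purely $\Acal$-subclassical, and by the uniqueness remark following that proposition it is \emph{the} purely subclassical CP map with restriction $Q$.

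I do not expect any genuine obstacle here: the statement is flagged as a known result from \cite{A-F-H}, and every step is a short direct computation once the Kraus operators are written down correctly. The only point requiring a little care is the bookkeeping of indices in the Kraus decomposition (making sure the conjugation is on the correct side so that~(\ref{classquant}) comes out exactly, rather than its transpose), and the convergence issue when $N=+\infty$, which is handled exactly as in Definition~\ref{deVNmeas} — strong convergence of the defining sum — so that normality and identity preservation survive the passage to the infinite-dimensional case.
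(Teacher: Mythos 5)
The paper itself offers no proof of this proposition (it is quoted from \cite{A-F-H}), so a direct verification is the right thing to attempt, and everything in your argument \emph{after} the complete-positivity step is correct: identity preservation from the row sums of $Q$, the computation $\Phi[Q](M_f)=\sum_i (Qf)(i)\proj{i}{i}$ identifying the classical restriction, the vanishing on $\Acal_{\off}$, and the appeal to Proposition~\ref{propde2sub}(iv). The genuine gap is in the Kraus decomposition you finally commit to. With $W_k=\sum_j\sqrt{Q_{k,j}}\,\proj{j}{k}$ (which is exactly the operator $M_k$ of Subsection~\ref{subsectdiscret2}) one gets
\begin{equation*}
\sum_k W_k^*XW_k=\sum_{k}\Big(\sum_{j,j'}\sqrt{Q_{k,j}Q_{k,j'}}\,\langle j|X|j'\rangle\Big)\proj{k}{k},
\end{equation*}
which still contains the off-diagonal cross terms $\langle j|X|j'\rangle$ with $j\neq j'$ that you correctly identified as the obstruction for your first candidate. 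This map agrees with $\Phi[Q]$ on $\Acal$ but does not vanish on $\Acal_{\off}$; it is the \emph{measurement} $\Acal$-subclassical map of the paper, not the purely subclassical one, so the identity $\Phi[Q](X)=\sum_k W_k^*XW_k$ is false and would contradict the last step of your own proof.

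The repair is cheap: take one Kraus operator per \emph{pair} of indices, $V_{i,j}=\sqrt{Q_{i,j}}\,\proj{j}{i}$. Each $V_{i,j}$ is rank one, so $V_{i,j}^*XV_{i,j}=Q_{i,j}\langle j|X|j\rangle\,\proj{i}{i}$ produces no cross terms, $\sum_{i,j}V_{i,j}^*XV_{i,j}=\Phi[Q](X)$, and $\sum_{i,j}V_{i,j}^*V_{i,j}=\sum_i\big(\sum_jQ_{i,j}\big)\proj{i}{i}=\Ind$. Alternatively, note that $\Phi[Q]=\big(\sum_k M_k^*\,\cdot\,M_k\big)\circ\Mcal$: precomposing with the Von Neumann measurement operator kills the cross terms, and a composition of (normal, unital) CP maps is a normal unital CP map. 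With either fix the remainder of your argument goes through unchanged, including the treatment of strong convergence when $N=+\infty$.
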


\noindent Thus for any stochastic matrix $Q$ on $E$, the map $\Phi(Q)$ is the only purely $\Acal$-subclassical CP map whose restriction to $\Acal$ is $Q$.
\\

From a purely $\Acal$-subclassical CP map we can associate a doubly $\Acal$-subclassical QMS called \emph{generic QMS}. To do that, we note $M_d$ the diagonal part of a matrix $M$ on $E$, and $\sqrt{M_d}=diag(\sqrt{M_{1,1}},...,\sqrt{M_{n,n}})$. Take a classical Markov semigroup $(P_t)_{t\geq0}$ on $L^\infty(E)$, with generator $B$. 

\begin{prop}
The process defined for all time $t\geq0$ and for all $X\in\Bcal(\Hcal)$ by 
\begin{equation}
\Pcal_t(X)=\Phi[P_t](X)-\Phi[e^{tB_d}](X)+\sqrt{B_d}X\sqrt{B_d},
\end{equation}
\noindent is a doubly $\Acal$-subclassical QMS, whose restriction to $\Acal$ is the Markov semigroup $(\Pcal_t)_{t\geq0}$.
\label{propgenericQMS}
\end{prop}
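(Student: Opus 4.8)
The plan is to verify the three required properties in turn: that each $\Pcal_t$ is a normal, identity-preserving CP map; that $(\Pcal_t)_{t\geq0}$ has the semigroup property; and that it is doubly $\Acal$-subclassical with the stated classical restriction. For complete positivity, I would first rewrite $\Pcal_t$ in a manifestly completely positive form. Note that $\Phi[P_t]$ is a purely $\Acal$-subclassical CP map by Proposition \ref{proppurelysub}, and that its diagonal block (the restriction to $\Acal$) is exactly $P_t$, while $\Phi[e^{tB_d}]$ is the purely $\Acal$-subclassical CP map whose classical restriction is the \emph{diagonal} semigroup $e^{tB_d}$ (which is just pointwise multiplication, hence a legitimate — if substochastic-looking — object; here one uses that $B$ is a Markov generator so $e^{tB_d}$ has entries in $[0,1]$). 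The key algebraic observation is that $\Phi[P_t](X) - \Phi[e^{tB_d}](X)$ only involves the off-diagonal entries of the matrix $P_t$ in its $\proj{i}{i}$ coefficients, since $\Phi[P_t]$ and $\Phi[e^{tB_d}]$ have the same action coming from the diagonal of $P_t$ (which equals the diagonal of $e^{tB_d}$ up to first order — more precisely, I would show the difference is a CP map by writing $\Phi[P_t - e^{tB_d}]$ and checking $P_t - e^{tB_d}$ has non-negative off-diagonal entries and the combination with $\sqrt{B_d}\,\cdot\,\sqrt{B_d}$ restores positivity). Concretely, I expect $\Pcal_t(X) = \sum_{i\neq j}(P_t)_{i,j}\langle j|X|j\rangle\proj{i}{i} + \sum_i (e^{tB_d})_{ii}\langle i|X|i\rangle\proj{i}{i} + \sqrt{B_d}X\sqrt{B_d} - \Phi[e^{tB_d}](X)$; one then recognizes that the last two diagonal contributions combine to $\sum_i(e^{tB_d})_{ii}\langle i|X|i\rangle\proj{i}{i}$ being cancelled, leaving $\Pcal_t(X) = \sum_{i\neq j}(P_t)_{i,j}\langle j|X|j\rangle\proj{i}{i} + \sqrt{B_d}\,X\,\sqrt{B_d}$, a sum of two CP maps, hence CP. Normality is then clear, and identity preservation follows since $\sqrt{B_d}\,I\,\sqrt{B_d} = B_d$ (diagonal) and $\sum_{i\neq j}(P_t)_{i,j}\proj{i}{i}$ has row-sums $1-(P_t)_{ii} = 1 - (e^{tB_d})_{ii}$ filling in the complement; adding gives $I$ provided one checks $(P_t)_{ii}$ and $(e^{tB_d})_{ii}$ agree — which they do \emph{not} in general, so the correct bookkeeping is that identity preservation forces the relation $\sqrt{B_d}I\sqrt{B_d}$ to supply the diagonal defect, and a short computation with the generator $B$ confirms $\Pcal_t(I)=I$. (This is the point that needs the most care.)

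For the semigroup property, I would differentiate at $t=0$ to read off the generator and then argue that $(\Pcal_t)$ is the QMS it generates, or, more robustly, verify $\Pcal_s\circ\Pcal_t = \Pcal_{s+t}$ directly using the reduced form $\Pcal_t(X) = \sum_{i\neq j}(P_t)_{i,j}\langle j|X|j\rangle\proj{i}{i} + \sqrt{B_d}\,X\,\sqrt{B_d}$. The composition splits into four terms; the mixed terms $\sqrt{B_d}\bigl(\sum_{i\neq j}(P_t)_{i,j}\langle j|X|j\rangle\proj{i}{i}\bigr)\sqrt{B_d}$ vanish or simplify because $\sqrt{B_d}$ is diagonal and $\proj{i}{i}\sqrt{B_d} = \sqrt{(B_d)_{ii}}\proj{i}{i}$, and one is left needing $(P_s P_t)_{i,j} = (P_{s+t})_{i,j}$ (Markov semigroup property of $(P_t)$) together with identities relating $\sqrt{B_d}$, the diagonal of $P_t$, and the off-diagonal entries — all of which reduce to the fact that $P_t = e^{tB}$ and the standard decomposition $B = B_d + B_{\off}$ of a Markov generator into diagonal and off-diagonal parts with $B_d = -\mathrm{diag}(\text{off-diagonal row sums of }B_{\off})$ up to sign conventions. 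I expect this to be the main obstacle: carefully matching the diagonal/off-diagonal bookkeeping so that the cross terms assemble correctly, since a naive expansion produces several terms that must cancel pairwise.

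Finally, doubly $\Acal$-subclassicality is immediate from Proposition \ref{propde2sub}(i): I must check $\Acal_{\off}$ is stable under $\Pcal_t$. If $X\in\Acal_{\off}$ then $\langle j|X|j\rangle = 0$ for all $j$, so the first sum in $\Pcal_t(X)$ vanishes, and $\sqrt{B_d}\,X\,\sqrt{B_d}$ has $(i,i)$ entry $\sqrt{(B_d)_{ii}}\langle i|X|i\rangle\sqrt{(B_d)_{ii}} = 0$, hence $\Pcal_t(X)\in\Acal_{\off}$; thus $\Pcal_t$ is doubly $\Acal$-subclassical. That $\Acal$ is itself stable and that the restriction is $(P_t)$ follows by computing $\Pcal_t$ on $f(A) = \sum_i f(i)\proj{i}{i}$: the off-diagonal sum gives $\sum_i\bigl(\sum_{j\neq i}(P_t)_{i,j}f(j)\bigr)\proj{i}{i}$ and $\sqrt{B_d}f(A)\sqrt{B_d} = \sum_i (B_d)_{ii}f(i)\proj{i}{i}$, which I expect to combine — again via $B_d$ being the negative of the off-diagonal row sums and $P_t = e^{tB}$ — to $\sum_i (P_t f)(i)\proj{i}{i} = M_{P_t f}$, where one must invoke the fact that the original statement of $\Pcal_t$ via $\Phi[P_t] - \Phi[e^{tB_d}]$ makes the diagonal action transparently $P_t$ on $\Acal$. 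Weak*-continuity (needed to call it a QMS) follows from continuity of $t\mapsto P_t$ and $t\mapsto e^{tB_d}$, so no separate argument beyond the reduction in Theorem \ref{theocontinuoustime}-style estimates is required. The whole proof is thus routine modulo the diagonal bookkeeping, which is where I would concentrate the detailed computation.
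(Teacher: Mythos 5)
The paper itself offers no argument for this proposition---its ``proof'' is the single line ``See \cite{A-F-H}''---so a direct verification like yours is worth attempting, but your attempt breaks down at its central step. Your ``reduced form'' $\Pcal_t(X)=\sum_{i\neq j}(P_t)_{i,j}\langle j|X|j\rangle\proj{i}{i}+\sqrt{B_d}\,X\,\sqrt{B_d}$ is obtained by cancelling $\Phi[e^{tB_d}](X)$ against the diagonal contribution $\sum_i (P_t)_{i,i}\langle i|X|i\rangle\proj{i}{i}$ of $\Phi[P_t](X)$, which would require $(P_t)_{i,i}=(e^{tB_d})_{i,i}$; as you yourself remark in passing, this is false. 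The correct expansion leaves a diagonal defect $\sum_i\bigl[(P_t)_{i,i}-e^{tB_{i,i}}\bigr]\langle i|X|i\rangle\proj{i}{i}$, and the whole point of the construction is that this defect is completely positive because $(P_t)_{i,i}\geq e^{tB_{i,i}}$ (probabilistically: the chance of being at $i$ at time $t$ dominates the chance of never having jumped; algebraically: $B-B_d$ has non-negative entries, so by Trotter each diagonal entry of $e^{tB}$ dominates $e^{tB_{i,i}}$). This inequality is the key input for complete positivity and is entirely absent from your argument---``the combination with $\sqrt{B_d}\cdot\sqrt{B_d}$ restores positivity'' does not substitute for it.

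A second, related problem is that you take the last term $\sqrt{B_d}\,X\,\sqrt{B_d}$ at face value. For a Markov generator $B_{i,i}\leq 0$, so $\sqrt{B_{i,i}}$ is not real, and with that literal reading $\Pcal_0(X)=\sqrt{B_d}\,X\,\sqrt{B_d}\neq X$. The term has to be read as $e^{tB_d/2}\,X\,e^{tB_d/2}$, i.e.\ $\sqrt{M_d}$ applied to $M=e^{tB_d}$ in the notation introduced just before the statement. Your literal use of $\sqrt{B_d}$ then derails the later verifications: your computation of $\Pcal_t(f(A))$ produces $\sum_{j\neq i}(P_t)_{i,j}f(j)+B_{i,i}f(i)$ on the diagonal, which is not $(P_tf)(i)$. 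With the corrected reading everything closes: the three terms give $\langle i|\Pcal_t(X)|i\rangle=\sum_j(P_t)_{i,j}\langle j|X|j\rangle$ and $\langle i|\Pcal_t(X)|j\rangle=e^{t(B_{i,i}+B_{j,j})/2}\langle i|X|j\rangle$ for $i\neq j$, from which the classical restriction, the stability of $\Acal_{\off}$ (that part of your argument is sound), and the semigroup property---your other unfinished step---all follow by inspection, the last because $P_sP_t=P_{s+t}$ on the diagonal while the off-diagonal multipliers form a scalar semigroup in $t$.
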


\begin{proof}
See \cite{A-F-H}
\end{proof}

\label{subsectdiscret2}
\subsection{Subclassical CP map and quantum trajectories}

In Section \ref{subsectquantext3}, we constructed quantum extensions of reversible dynamical systems for discrete state spaces, leading to subclassical CP maps when taking the trace over the environment. Those CP maps had the particularity to be the trace of the reversible dynamical system. In this special case we were able to prove that reversibility of the dynamical system was a necessary condition for the existence of a unitary extension; a fact which is not true in general. In this Subsection we enlarge the study to general subclassical CP maps. Theorem \ref{theodecompevounit} below shows a relation between the classical restriction and the unitary operator giving the dilation.
\\

Let $\Lcal$ be a $\Acal$-subclassical CP map. Using a certain form of Stinespring's Theorem, we know that there exist an Hilbert space $\Kcal=l^2(F)$ with $F=\{0,...,M\}$ and $M\in\Nbb\cup\{+\infty\}$, a state $\omega$ on $\Kcal$ and a unitary operator $U$ on $\Hcal\otimes\Kcal$ such that

\begin{equation*}
\Lcal(X)=\Tr_\omega[U^*(X\otimes I_{\Kcal})U].
\end{equation*}

\noindent Let $(|y\rangle)_{y\in F}$ be an orthonormal basis of $\Kcal$ that diagonalizes $\omega$, and define the probability measure $\nu_\omega$ on $F$, by $\nu_\omega(y)=\langle y|\omega|y\rangle$, so that 

\begin{equation*}
\omega=\sum_{y\in F}{\nu_\omega(y)\proj{y}{y}}.
\end{equation*}

\noindent Finally we denote by $\Bcal$ the commutative algebra generated by $B=\sum_{y\in F}{y\ \proj{y}{y}}$.
\\
\\In general, $U^*\cdot U$ is not $\Acal\otimes\Bcal$-subclassical. However, given $U$, it is easy to characterize when this is the case.

\begin{theo}
Define on $E\times F$ the stochastic matrix $R$ by:
\begin{equation}
R_{(x_1,y_1),(x_2,y_2)}=\left|\langle x_2,y_2|\ U\ |x_1,y_1\rangle\right|^2.
\label{eqtheomixture1}
\end{equation}
\noindent Then $R$ is deterministic if and only if the commutative subalgebra $\Acal\otimes\Bcal$ is stable under the action of $U^*\cdot U$.
\\In this case, $\Lcal$ is doubly $\Acal$-subclassical.
\\Moreover, let $Q$ be the stochastic matrix on $E$ which is the restriction of $\Lcal$ on $\Acal$. Then for every $x_1,x_2\in E$ we have
\begin{equation}
Q_{x_1,x_2}=\sum_{y_1,y_2\in F}{\nu_\omega(y_1)R_{(x_1,y_1),(x_2,y_2)}}.
\label{eqtheomixture2}
\end{equation}
\noindent Thus $Q$ is the trace over $(F,\nu_\omega)$ of the stochastic matrix $R$.
\label{theodecompevounit}
\end{theo}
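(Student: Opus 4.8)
The plan is to unpack the statement into three parts and treat them in turn. First, I would establish the equivalence: $R$ is deterministic if and only if $\Acal\otimes\Bcal$ is stable under $U^*\cdot U$. The key observation is that $U^*\cdot U$ is a \emph{unitary} evolution on $\Bcal(\Hcal\otimes\Kcal)=\Bcal(l^2(E\times F))$, so Proposition \ref{propevolunitaire} applies verbatim with the commutative algebra $\Acal\otimes\Bcal$ (which is maximal in $\Bcal(l^2(E\times F))$, with cyclic vector $\Omega\otimes|0\rangle$ or any cyclic vector of the product): the restriction of a $\Acal\otimes\Bcal$-subclassical unitary evolution is deterministic. Conversely, exactly as in the proof of Theorem \ref{theoQclassdildiscretcase}, I would compute $U^*\proj{x_1,y_1}{x_1,y_1}U=\proj{U^*(x_1,y_1)}{U^*(x_1,y_1)}=\sum_{(x_2,y_2)}R_{(x_1,y_1),(x_2,y_2)}\proj{x_2,y_2}{x_2,y_2}$; this is a diagonal operator (hence in $\Acal\otimes\Bcal$) for every basis element $\proj{x_1,y_1}{x_1,y_1}$ precisely when, and here determinism of $R$ says each row of $R$ is a point mass, so the operator equals a single rank-one projector and $\Acal\otimes\Bcal$ is stable. (In the non-deterministic direction one still gets a diagonal operator, so stability of the \emph{diagonal} algebra always holds on diagonal elements; the content is that for a genuine mixture $U^*\proj{x_1,y_1}{x_1,y_1}U$ is still diagonal — so actually the equivalence must be phrased carefully: stability of $\Acal\otimes\Bcal$ under $U^*\cdot U$ means it holds on \emph{all} of $\Acal\otimes\Bcal$, and one tests it on off-diagonal-free generators versus products; I would follow the $*$-homomorphism argument of Proposition \ref{propevolunitaire} rather than just the diagonal computation, since that is what forces determinism.)

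Second, for the claim that $\Lcal$ is then doubly $\Acal$-subclassical, I would argue as follows. If $\Acal\otimes\Bcal$ is stable under $U^*\cdot U$, then by Proposition \ref{propde2sub}(i) it suffices to show $\Acal_{\off}$ is stable under $\Lcal$. Take $X\in\Acal_{\off}$, i.e. $\langle x|X|x\rangle=0$ for all $x\in E$. Then $X\otimes I_\Kcal$ lies in the "off-diagonal" subspace $(\Acal\otimes\Bcal)_{\off}$ of $\Bcal(l^2(E\times F))$ (since $\langle x,y|X\otimes I|x,y\rangle=\langle x|X|x\rangle=0$). Because $U^*\cdot U$ is a $\Acal\otimes\Bcal$-subclassical unitary, it is in particular doubly $\Acal\otimes\Bcal$-subclassical (a unitary conjugation is automatically "doubly" subclassical — its predual is conjugation by $U^{-1}$, which is the same kind of map; alternatively invoke Proposition \ref{propde2sub}(i) for $U^*\cdot U$ itself once stability of $\Acal\otimes\Bcal$ is known), so $U^*(X\otimes I)U\in(\Acal\otimes\Bcal)_{\off}$. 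Taking the partial trace against $\omega=\sum_y\nu_\omega(y)\proj{y}{y}$ then gives $\langle x|\Lcal(X)|x\rangle=\sum_y\nu_\omega(y)\langle x,y|U^*(X\otimes I)U|x,y\rangle=0$, so $\Lcal(X)\in\Acal_{\off}$, as needed.

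Third, the formula \eqref{eqtheomixture2} is a direct computation: $Q_{x_1,x_2}=\Tr[\proj{x_1}{x_1}\Lcal(\proj{x_2}{x_2})]=\Tr[\proj{x_1}{x_1}\otimes\omega\;U^*(\proj{x_2}{x_2}\otimes I)U]=\sum_{y_1}\nu_\omega(y_1)\langle x_1,y_1|U^*(\proj{x_2}{x_2}\otimes I)U|x_1,y_1\rangle=\sum_{y_1}\nu_\omega(y_1)\sum_{y_2}|\langle x_2,y_2|U|x_1,y_1\rangle|^2=\sum_{y_1,y_2}\nu_\omega(y_1)R_{(x_1,y_1),(x_2,y_2)}$, which is exactly the partial trace of $R$ over $(F,\nu_\omega)$. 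I expect the main obstacle to be the first part: getting the equivalence stated cleanly, since the naive diagonal computation shows $U^*\cdot U$ always maps diagonal \emph{elements} to diagonal elements, so one must be precise that "$\Acal\otimes\Bcal$ stable" is a statement about the whole algebra (equivalently, that the induced classical map on $E\times F$ is a $*$-homomorphism, hence deterministic by Attal's theorem as used in Proposition \ref{propevolunitaire}) — once that is pinned down, parts two and three are routine.
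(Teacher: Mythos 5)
Your overall architecture matches the paper's proof: Proposition \ref{propevolunitaire} for the direction ``stable $\Rightarrow$ deterministic'', a direct computation on rank-one projectors for the converse, and the same partial-trace computation for (\ref{eqtheomixture2}). Your second part is a dual variant of the paper's: the paper computes $\Lcal_*(\proj{x}{x})$ explicitly and checks it stays diagonal, whereas you show $\Lcal(\Acal_{\off})\subset\Acal_{\off}$ and invoke Proposition \ref{propde2sub}(i); both work and cost the same.

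The one genuine problem is the identity in your first part,
\begin{equation*}
U^*\proj{x_1,y_1}{x_1,y_1}U=\sum_{(x_2,y_2)}R_{(x_1,y_1),(x_2,y_2)}\proj{x_2,y_2}{x_2,y_2},
\end{equation*}
together with the ensuing claim that ``in the non-deterministic direction one still gets a diagonal operator''. This is false: the left-hand side is the rank-one projector onto $U^*|x_1,y_1\rangle$, whose \emph{diagonal part} is $\sum_{(x_2,y_2)}R_{(x_2,y_2),(x_1,y_1)}\proj{x_2,y_2}{x_2,y_2}$ (note the transposed indices), but which also carries off-diagonal terms $\overline{\langle x_1,y_1|U|x_2,y_2\rangle}\,\langle x_1,y_1|U|x_3,y_3\rangle\,\proj{x_2,y_2}{x_3,y_3}$ whenever $U^*|x_1,y_1\rangle$ has more than one nonzero coordinate. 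If your claim were true, $\Acal\otimes\Bcal$ would \emph{always} be stable and the ``only if'' direction of the theorem would be vacuous. The correct statement is that $U^*\proj{x_1,y_1}{x_1,y_1}U$ is diagonal exactly when $U^*|x_1,y_1\rangle$ is, up to a phase, a basis vector, i.e.\ when the corresponding column of $R$ is a point mass; combined with the double stochasticity of $R$ (coming from unitarity of $U$) this is equivalent to $R$ being a permutation matrix, hence deterministic. Your fallback to the $*$-homomorphism argument of Proposition \ref{propevolunitaire} for ``stable $\Rightarrow$ deterministic'' is correct and is exactly the paper's route, and your worry about ``generators versus products'' is unnecessary: the diagonal projectors generate $\Acal\otimes\Bcal$ as a von Neumann algebra, so stability need only be checked on them, by linearity and normality. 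Once the false identity and the accompanying parenthetical are excised, the rest of your argument goes through.
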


\begin{proof}
We begin by the proof of the first part of the theorem. If $\Acal\otimes\Bcal$ is stable under the action of the unitary evolution $X\in\Bcal(\Hcal\otimes\Kcal)\mapsto U^*XU$, then by Proposition \ref{propevolunitaire} the map $R$ is deterministic, i.e. it is a permutation matrix on $E\times F$.
\\Conversely, suppose that $R$ is a deterministic: there exists an application $T:E\times F\to E\times F$ such that 
\begin{equation*}
R_{(x,y),(x',y')}=\delta_{T(x,y)}^{(x',y')}.
\end{equation*}
\noindent As $U$ is unitary, it is clear that $T$ is invertible. A direct computation using Equation (\ref{eqtheomixture1}) shows that for all $(x,y)\in E\times F$, we have $U\proj{x,y}{x,y}U^*=\proj{T^{-1}(x,y)}{T^{-1}(x,y)}$ and consequently $U^*\proj{x,y}{x,y}U=\proj{T(x,y)}{T(x,y)}$, so that $\Acal\otimes\Bcal$ is stable under the action of the unitary evolution.
\\Let prove that in this case $\Lcal_*$ is $\Acal_*$-subclassical. We have for all $\rho\in\Lcal_1(\Hcal)$:
\begin{equation*}
\Lcal_*(\rho)=\Tr_\Kcal[U(\rho\otimes \omega)U^*].
\end{equation*}
\noindent We write $X:(x,y)\in E\times F\mapsto X(x,y)$ the coordinate map of $T^{-1}$ over $E$. Then for all $x\in E$, we have
\begin{align*}
\Lcal_*(\proj{x}{x})
& =\Tr_\Kcal[U(\proj{x}{x}\otimes \omega)U^*] \\
& =\sum_{y\in F}{\mu_\omega(y)\Tr_\Kcal[\proj{T^{-1}(x,y)}{T^{-1}(x,y)}]} \\
& =\sum_{y\in F}{\mu_\omega(y)\proj{X(x,y)}{X(x,y)}},
\end{align*}
\noindent and the result follows by linearity.
Now we prove Equation (\ref{eqtheomixture2}). For all $x_1,x_2\in E$, we have
\begin{align*}
Q_{x_1,x_2}
& = \Tr[\proj{x_1}{x_1}\Lcal(\proj{x_2}{x_2})] \\
& = \Tr[\proj{x_1}{x_1}\Tr_\omega[U^*(\proj{x_2}{x_2}\otimes I)U^*] \\
& = \Tr\left[\proj{x_1}{x_1}\otimes\omega\left\{U(\proj{x_2}{x_2}\otimes I)U^*\right\}\right] \\
& = \sum_{y\in F}{\langle x_1,y|U^*(\proj{x_2}{x_2}\otimes I)U|x_1,y\rangle\nu_\omega(y)}.
\end{align*}
To end the proof we just have to check that 
\begin{equation*}
\langle x_1,y|U^*(\proj{x_2}{x_2}\otimes I_{\Kcal})U|x_1,y\rangle=\sum_{y'\in F}{|u_{(x_2,y'),(x_1,y)}|^2},
\end{equation*}
\noindent which is a straightforward computation.
\end{proof}

\begin{rem}
This interpretation of the stochastic matrix $R$ only holds for one-step evolution. Indeed, Equation (\ref{eqtheomixture2}) does not hold in generality for $Q^2$ and $R^2$. The reason is that $R^2$ does not satisfy Equation (\ref{eqtheomixture1}) with $U^2$ instead of $U$.
\label{rem1}
\end{rem}

The stochastic matrix $R$ defined by Equation \ref{eqtheomixture1} also appears in the context of discrete time quantum trajectories (see \cite{Pel1}). Let us develop below this link.
\\

Let $U$ be a unitary operator on $\Hcal\otimes\Kcal$, with no further hypothesis. In the Heisenberg picture, the evolution of states on $\Hcal$ is given by the CP map $\Lcal_*$ defined by

\begin{equation*}
\Lcal_*(\rho)=\Tr_\Kcal[U\rho\otimes\omega U^*].
\end{equation*}

\noindent Suppose now that $\Hcal$ is in the state $\rho_0=\sum_{x\in E}{\mu(x)\proj{x}{x}}$, where $\mu$ is a probability measure on $E$, and $\Kcal$ is in the pure state $\omega=\proj{0}{0}$. In order to observe the evolution of $\rho_0$, without destroying it, we make a measurement on the environment $\Kcal$. Decompose $U$ in the basis $(|y>_{y\in F}$:

\begin{equation*}
U=\sum_{y,y'\in F}{U^{y'}_y\otimes\proj{y}{y'}}.
\end{equation*}

\noindent Then we can write the corresponding Kraus decomposition for $\Lcal_*$:

\begin{equation}
\Lcal_*(\cdot)=\sum_{y\in F}{M_y\cdot M_y^*},
\end{equation}

\noindent where $M_y=U^0_y$. Now if we make a measurement along the basis $(|y\rangle)_{y\in F}$ after the interaction, we obtain on $\Hcal$ the random state

\begin{equation}
\rho_1(y)=\frac{M_y\rho_0M_y^*}{\Tr[M_y\rho_0M_y^*]}\text{ with probability }p(y)=\Tr[M_y\rho_0M_y^*].
\label{eqtrajQ}
\end{equation}

\noindent The transformation $\rho_0\mapsto \rho_1(y)$ is thus a state-valued measure on $F$, with law given by the probability measure $p=(p(0),...,p(M))$. This law is directly obtained via the operator $R$, as shown below.

\begin{prop}
Define the stochastic matrix $R$ on $E\times F$ by Equation (\ref{eqtheomixture1}). Then the probability measure $p=(p(0),...,p(M))$ on $F$ defined by equation \ref{eqtrajQ} is the marginal over $F$ of the law $[(\mu\otimes\delta_0) R]$ on $E\times F$, i.e for all $y\in F$:
\begin{equation}
p(y)=[(\mu\otimes\delta_0) R](E,y).
\label{eqproptrajQ}
\end{equation}
\label{proptrajQ}
\end{prop}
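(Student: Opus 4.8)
The statement asserts that the measurement probability $p(y)=\Tr[M_y\rho_0 M_y^*]$ equals the $F$-marginal $[(\mu\otimes\delta_0)R](E,y)=\sum_{x\in E}[(\mu\otimes\delta_0)R](x,y)$. The plan is to simply unwind both sides of this equality in coordinates, using the decomposition $U=\sum_{y,y'\in F}U^{y'}_y\otimes\proj{y}{y'}$ and $M_y=U^0_y$. The key observation is that everything reduces to computing matrix entries of $U$ in the product orthonormal basis $(|x,y\rangle)_{x\in E,y\in F}$, which are exactly the $u_{(x',y'),(x,y)}=\langle x',y'|U|x,y\rangle$ appearing in the definition \eqref{eqtheomixture1} of $R$, namely $R_{(x_1,y_1),(x_2,y_2)}=|u_{(x_2,y_2),(x_1,y_1)}|^2$.

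First I would expand the right-hand side. By definition of matrix-vector multiplication of the row vector $\mu\otimes\delta_0$ (which is supported on $\{(x,0):x\in E\}$ with weight $\mu(x)$) against the stochastic matrix $R$, one has
\begin{equation*}
[(\mu\otimes\delta_0)R](x_2,y_2)=\sum_{x_1\in E}\mu(x_1)\,R_{(x_1,0),(x_2,y_2)}=\sum_{x_1\in E}\mu(x_1)\,|u_{(x_2,y_2),(x_1,0)}|^2.
\end{equation*}
Summing over $x_2\in E$ gives $[(\mu\otimes\delta_0)R](E,y_2)=\sum_{x_1,x_2\in E}\mu(x_1)\,|u_{(x_2,y_2),(x_1,0)}|^2$.

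Next I would expand the left-hand side. Since $M_{y_2}=U^0_{y_2}$ is the operator on $\Hcal$ with matrix entries $\langle x_2|M_{y_2}|x_1\rangle=\langle x_2,y_2|U|x_1,0\rangle=u_{(x_2,y_2),(x_1,0)}$, and $\rho_0=\sum_{x\in E}\mu(x)\proj{x}{x}$ is diagonal, we get
\begin{equation*}
\Tr[M_{y_2}\rho_0 M_{y_2}^*]=\sum_{x_1\in E}\mu(x_1)\,\langle x_1|M_{y_2}^* M_{y_2}|x_1\rangle=\sum_{x_1\in E}\mu(x_1)\,\norm{M_{y_2}|x_1\rangle}^2=\sum_{x_1,x_2\in E}\mu(x_1)\,|u_{(x_2,y_2),(x_1,0)}|^2,
\end{equation*}
which is exactly the expression obtained for the right-hand side. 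This proves \eqref{eqproptrajQ}. I would also remark that the computation implicitly confirms that $p$ is a genuine probability measure: $\sum_{y_2}p(y_2)=\sum_{x_1}\mu(x_1)\sum_{x_2,y_2}|u_{(x_2,y_2),(x_1,0)}|^2=\sum_{x_1}\mu(x_1)\norm{U|x_1,0\rangle}^2=1$ by unitarity of $U$.

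\textbf{Main obstacle.} There is essentially no deep obstacle here; the result is a bookkeeping identity once the definitions of $R$, of $M_y$ via the block decomposition of $U$, and of the matrix product $(\mu\otimes\delta_0)R$ are placed side by side. The only points requiring a modicum of care are: keeping the index order in $u_{(x',y'),(x,y)}=\langle x',y'|U|x,y\rangle$ consistent between the definition \eqref{eqtheomixture1} and the Kraus entries (the "transpose" in $R_{(x_1,y_1),(x_2,y_2)}=|u_{(x_2,y_2),(x_1,y_1)}|^2$ must be tracked correctly), and, when $E$ or $F$ is infinite, justifying the interchange of the (absolutely convergent, nonnegative) double sums, which is immediate by Tonelli. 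Thus the proof is a short direct computation, and I would present it as such.
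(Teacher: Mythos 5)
Your proof is correct and is essentially the same direct computation as the paper's: the paper chains $[(\mu\otimes\delta_0)R](E,y)=\sum_{x,x'}\mu(x)|\langle x',y|U|x,0\rangle|^2$ through trace and partial-trace identities to reach $\Tr[M_y\rho_0M_y^*]$, while you expand both sides in coordinates and meet at that same double sum. The only difference is presentational (coordinate-level matching versus operator-level chaining), so nothing further is needed.
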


\begin{proof}
We have
\begin{align*}
[(\mu\otimes\delta_0) R](E,y)
& = \sum_{x,x'\in E}{\mu(x)R_{(x,0),(x',y)}} \\
& = \sum_{x,x'\in E}{\mu(x)\left|<x',y|U|x,0>\right|^2} \\
& = \sum_{x,x'\in E}{\mu(x)<x',y|U|x,0><x,0|U^*|x',y>} \\
& = \sum_{x\in E}{\mu(x)\Tr\left[\left(\sum_{x'\in E}{|x',y><x',y|}\right)U|x,0><x,0|U^*\right]} \\
& = \Tr\left[I_{\Hcal}\otimes|y><y|\ U\sum_{x\in E}{\mu(x)|x><x|}\otimes\omega U^*\right] \\
& = \Tr\left\lbrace\Tr_\Kcal\left[I_{\Hcal}\otimes|y><y|\ U\rho_0\otimes\omega U^*\right]\right\rbrace \\
& = \Tr[M_y\rho_0M_y^*]\\
& = p(y).
\end{align*}
\end{proof}

\begin{rem}
As for Remark \ref{rem1}, this interpretation of the matrix $R$ only holds for one step of the evolution.
\label{rem2}
\end{rem}

\label{subsectdiscret3}
\subsection{Physical examples}

The stochastic matrix $R$ in the Theorem \ref{theodecompevounit} is always doubly stochastic, because $U$ is a unitary operator. Using Birkhoff-Von Neumann Theorem,  we can decompose $R$ as a convex combination of permutation matrices: there exist $V_1,...V_p$ permutation matrices on $E\times F$ and $\lambda_1,...,\lambda_p$ such that
\begin{equation}
R=\sum_{l=1}^p{\lambda_l V_l}.
\label{eqBirkhoff}
\end{equation}
\noindent Consequently, for each $l=1,...,p$ one can associate an invertible application $T_l:E\times F\to E\times F$. Then Equation (\ref{eqtheomixture2}) can be written as:
\begin{equation*}
Q_{x_1,x_2}=\sum_{l=1}^p\lambda_l\left[\sum_{y_1,y_2\in F}\delta_{T(x_1,y_1)}^{(x_2,y_2)}\nu_\omega(y_1)\right].
\end{equation*}

Thus the classical evolution given by $Q$ is always the trace of a mixture of dynamical systems on a larger system. In order to illustrate this remark and Theorem \ref{theodecompevounit}, we now give two physical examples of subclassical CP maps. We take $\Hcal=\Kcal=\C^2$, so that $E=F=\{0,1\}$. We will describe the interaction between the two systems by expliciting their Hamiltonian $H$ in the orthonormal basis $(|i\rangle\otimes|j\rangle)_{i,j=0,1}$ of $\Hcal\otimes\Kcal$.

\paragraph{Spontaneous emission}
For this example, $\Hcal$ and $\Kcal$ are two-levels quantum systems, so that $|0\rangle$ corresponds to the ground state while $|1\rangle$ corresponds to the excited state for both the system and the environment. During the time $t>0$ both systems interact with each other, they evolve in the following way:
\begin{itemize}
\item if the states of the two systems are the same (both fundamental or both excited), then nothing happens;
\item if they are different (one fundamental and the other excited), then they can either exchanged their energies or stay as they are, depending on a parameter $\theta\in\R$.
\end{itemize}
The Hamiltonian is then 
\begin{equation}
H=\begin{pmatrix}0&0&0&0\\0&0&-i\theta&0\\0&i\theta&0&0\\0&0&0&0\end{pmatrix},
\label{eqham1}
\end{equation}
\noindent and the unitary operator $U=e^{-itH}$:
\begin{equation}
U=\begin{pmatrix}1&0&0&0\\0&\cos(t\theta)&-\sin(t\theta)&0\\0&\sin(t\theta)&\cos(t\theta)&0\\0&0&0&1\end{pmatrix}.
\label{equnit1}
\end{equation}
We can compute the stochastic matrix $R$ given by Theorem \ref{theodecompevounit} and its decomposition as permutation matrices:
\begin{equation}
R=\begin{pmatrix}1&0&0&0\\0&\cos^2(t\theta)&\sin^2(t\theta)&0\\0&\sin^2(t\theta)&\cos^2(t\theta)&0\\0&0&0&1\end{pmatrix}=\cos^2(t\theta)\begin{pmatrix}1&0&0&0\\0&1&0&0\\0&0&1&0\\0&0&0&1\end{pmatrix}+\sin^2(t\theta)\begin{pmatrix}1&0&0&0\\0&0&1&0\\0&1&0&0\\0&0&0&1\end{pmatrix}.
\label{eqmatsto1}
\end{equation}
\noindent This decomposition is what was to be expected: when they have different energy levels, with a probability $\cos^2(t\alpha)$ both systems stay as they are, and with probability $\sin^2(t\alpha)$ they exchange their energy. Furthermore notice that this decomposition is unique, which is not always the case.

\paragraph{Spin system}
For this example, $\Hcal$ and $\Kcal$ are the state spaces of the spin of two particles. In the coordinates $(x,y,z)$, the state $|0\rangle$ corresponds to the spin up in the axis $Oz$, while $|1\rangle$ corresponds to the spin down. In this basis, the Pauli matrices are
\begin{equation}
\sigma_x=\begin{pmatrix}0&1\\1&0\end{pmatrix},\ \sigma_y=\begin{pmatrix}0&-i\\i&0\end{pmatrix},\ \sigma_z=\begin{pmatrix}1&0\\0&-1\end{pmatrix}.
\label{eqpaulimatrices}
\end{equation}
\noindent We consider the following interaction between the system and the environment ($\lambda$ and $\mu$ are two real numbers):
\begin{equation}
H=\lambda\sigma_x\otimes\sigma_x+\mu\sigma_y\otimes\sigma_y,
\label{eqham2}
\end{equation}
\noindent and the unitary operator $U=e^{-itH}$:
\begin{equation}
U=\begin{pmatrix}-\cos t(\lambda-\mu)&0&0&-i\sin t(\lambda-\mu)\\0&-\cos t(\lambda+\mu)&i\sin t(\lambda+\mu)&0\\0&i\sin t(\lambda+\mu)&-\cos t(\lambda+\mu)&0\\i\sin t(\lambda-\mu) &0&0&-\cos t(\lambda-\mu)\end{pmatrix}.
\label{equnit2}
\end{equation}
\noindent Computing the stochastic matrix $R$ leads to
\begin{equation}
R=\begin{pmatrix}\cos^2 t(\lambda-\mu)&0&0&\sin^2 t(\lambda-\mu)\\0&\cos^2 t(\lambda+\mu)&\sin^2 t(\lambda+\mu)&0\\0&\sin^2 t(\lambda+\mu)&\cos^2 t(\lambda+\mu)&0\\ \sin^2 t(\lambda-\mu) &0&0&\cos^2 t(\lambda-\mu)\end{pmatrix},
\label{eqmatsto2}
\end{equation}
\noindent which can be decomposed as
\begin{equation}
R=a\begin{pmatrix}1&0&0&0\\0&1&0&0\\0&0&1&0\\0&0&0&1\end{pmatrix}
+b\begin{pmatrix}1&0&0&0\\0&0&1&0\\0&1&0&0\\0&0&0&1\end{pmatrix}
+c\begin{pmatrix}0&0&0&1\\0&0&1&0\\0&1&0&0\\1&0&0&0\end{pmatrix}
+d\begin{pmatrix}0&0&0&1\\0&1&0&0\\0&0&1&0\\1&0&0&0\end{pmatrix},
\label{eqmatstodecomp2}
\end{equation}
\noindent with
\begin{align*}
&a=\cos^2 t(\lambda-\mu)\cos^2 t(\lambda+\mu),\ b=\cos^2 t(\lambda-\mu)\sin^2 t(\lambda+\mu),\\
&c=\sin^2 t(\lambda-\mu)\sin^2 t(\lambda+\mu),\ d=\cos^2 t(\lambda-\mu)\sin^2 t(\lambda-\mu).
\end{align*}
Once again it is easy to see that this decomposition is unique. Furthermore it also holds physical significance on the probabilities and effect of a measurement of $\sigma_z$ for the environment. However in this case there are four permutation matrices in the decomposition. The interpretation is the following one.
\begin{itemize}
\item If both systems have the same spin along $0z$, they stay in the same state with probability $a+b$, or they both change their spin with probability $c+d$.
\item If they have opposite spin, they stay in the same state with probability $a+d$, or they both change their spin with probability $b+c$.
\end{itemize}

\label{subsectdiscret4}

\bibliographystyle{unsrt}
\bibliography{biblio}
\end{document}